\newtheorem{theorem}{Theorem}
\newtheorem{lemma}{Lemma}
\newtheorem{definition}{Definition}
\newcommand*{\field}[1]{\mathbb{#1}}%
\newcommand{\N}{\field{N}}%natural numbers
\newcommand{\R}{\field{R}}%real numbers
\newcommand{\DD}{\mathcal{D}}%distribution
\newcommand{\err}{\mathrm{err}}%error
\newcommand{\acc}{\mathrm{acc}}%accuracy
\newcommand{\x}{{m_1}} %number of data points of the first firm
\newcommand{\y}{{m_2}} %number of the data points of the second firm
\renewcommand{\a}{{a}} %competition exponent
\renewcommand{\b}{{b}} %combined competition exponent
\newcommand{\n}{{n}} %number of available data points of the third firm 
\newcommand{\p}{{p}} %price to purchase the third firm
\newcommand{\ms}{{\mu}} %market share
\newcommand{\util}{{u}} %utility
\newcommand{\cons}{{CW}} %consumer welfare
\newcommand{\xna}{(\x+\n)^\b}
\newcommand{\yna}{(\y+\n)^\b}
\newcommand{\xnaya}{\xna + \y^\b}
\newcommand{\xayna}{\x^\b+ \yna}
\newcommand{\xaya}{\x^\b + \y^\b}
\newcommand{\xa}{\x^\b}
\newcommand{\ya}{\y^\b}
\newcommand{\A}{{A}}
\newcommand{\C}{{C}}
\newcommand{\D}{{D}}
\newcommand{\HH}{\mathcal{H}}
\title{Equilibrium Characterization for Data Acquisition Games}
\author{
Jinshuo Dong\and Hadi Elzayn\and Shahin Jabbari\and
Michael Kearns\And Zachary Schutzman
\affiliations
University of Pennsylvania\\
\emails
\{jinshuo, hads\}@sas.upenn.edu,
\{jabbari, mkearns, ianzach\}@cis.upenn.edu
}
\begin{document}
\maketitle
\begin{abstract}
We study a game between two firms in which each provide a 
service based on machine learning.  The firms are presented 
with the opportunity to purchase a new corpus of data, which 
will allow them to potentially improve the quality of their products. 
The firms can decide whether or not they want to buy the data, as 
well as which learning model to build with that data. 
We demonstrate a reduction from this potentially complicated action 
space  to a one-shot, two-action game in which each firm only decides 
whether or not to buy the data. The  
game admits several regimes which depend on the relative strength of 
the two firms at the outset and the price at which the data is 
being offered. We analyze the game's Nash equilibria in all parameter 
regimes and demonstrate 
that, in expectation, the outcome of the game is that the initially stronger 
firm's market position weakens whereas the initially 
weaker firm's market position becomes stronger.  
Finally, we consider the perspective of the users of the service and 
demonstrate that the expected outcome at equilibrium is not the one 
which maximizes the welfare of the consumers.
\end{abstract}
\section{Introduction}
\label{sec:intro}
Recent years have seen explosive growth in the domain of digital data-driven services.  
Search engines, restaurant recommendations, and social media are
among the many products we use day-to-day which sit atop modern data analysis and 
machine learning (ML). In such markets, firms live and die by the quality of their models; 
thus success in the `race for data', whether acquired directly from customers or indirectly 
via acquisition of rival firms or purchasing data corpuses, is crucial. In this work, we study 
two questions: whether such markets tend towards monopoly, and how competition affects 
consumer welfare. Importantly, we consider these questions in light of the modeling choices 
that firms must make. 

In our model, two firms compete for market share (utility) by providing identical services that 
each rely on an ML model. The firms' error rates depend on their  
choices of algorithms, models and the volume of available training data. 
Each firm's market share is proportional to the error of its model relative to 
the model built by its competitor. This is motivated 
by the observation that the services built using ML are highly 
accurate, so users are more conscientious of the mistakes the service 
makes, rather than the successes. A competition exponent measures 
relative ferocity of competition and maps to a plausible Markov model 
of consumer choice. See Section~\ref{sec:dynamics} for more details. 

The firms initially possess (possibly differing) quantities of data,  
and are given the opportunity to buy additional data at a fixed price 
to improve their models. Since data is costly and \emph{relative} 
(rather than absolute)  model quality 
determines market share, each firm's best course of action may 
depend on the actions of its rival.
Hence, each firm acts strategically and faces two decisions: whether 
to buy the additional data, and what type of model to build in order to 
produce the best product given the data it ends up with. 

The decision of what model to build seems to complicate the firms' 
action space greatly; there is a very large set of model classes to select 
from, and different classes have different efficiencies. For example, when 
restricting attention to neural networks, the choices of depth and number of 
nodes per layer produce different hypothesis classes with different optimal 
models. Thus, in principle, the decisions of what model class to select and whether to purchase 
additional data must be made jointly. However, learning theory allows 
us to greatly reduce this large action space. In Section \ref{sec:ml}, we show 
that the game in which firms jointly choose a model and whether to attempt to 
buy the additional data reduces to a strategically equivalent game in which firms 
first choose whether to buy the data and then choose optimal models.

In Section \ref{sec:eq-analysis}, we characterize the Nash equilibria 
of our game for different parameter regimes. For no combination of 
parameters does exactly one firm wish to buy the data; unsurprisingly, 
for very high prices, neither firm buys data, and for very low prices, both 
firms do. In the middling regime, the competitive aspect of the game imposes 
a `prisoners\textquotesingle \ dilemma'-like flavor: both firms would 
prefer neither firm buy the data, but each do so in order to prevent the other 
from strengthening its position.
Moreover, the unique mixed strategy Nash equilibrium in this regime 
involves firms \emph{increasing} their probability of buying data as price 
\emph{increases}. This counterintuitive result follows from the logic of 
equilibrium: firms playing mixed strategies must be indifferent to buying 
and not buying the data, and as the price rises, the probability that a competitor 
acquires the data must rise in order to make investing in data acquisition a 
palatable option.  

Finally, we study whether any of the dynamics of the game push the market 
towards a monopoly. Perhaps counter to a `rich-get-richer' feedback loop 
that might be expected in data races, we observe 
that in all equilibria, the data gap (and thus, market share gap) always 
narrows (in expectation). As measured by consumer welfare, this is 
actually \emph{undesirable}. Both the direction of the data gap as well 
as the welfare implication may be counterintuitive, particularly 
with respect to the well-known stylized fact that market concentration is 
bad for consumers. However, consumer data that improves a service can 
be viewed as exhibiting a form of network effects, in which case perfect competition can 
result in inefficiency and under-provisioning of a good~\cite{katz1985network}. 
In other words, a greater data gap would result in more consumers using 
a less error-prone service. As for the data race, anecdotal evidence, such as 
GM's acquisition of automated driving startup Cruise, despite Waymo's earlier market 
entry and research head-start, are suggestive (though not conclusive) that these 
predictions may be indicative of real-world dynamics~\cite{Primack16}.

We view our work as a first step towards modeling and analyzing competition for data in 
markets driven by ML. 
Under our simplifying assumptions, we derive concrete results with 
relevance both for policymakers analyzing algorithmic actors as well as engineering or business 
decision-makers considering the tasks of data acquisition and model selection. 
Our results are qualitatively robust to other natural 
modeling choices, such as allowing both firms to purchase the data, as well as treating 
the data seller as a market participant; however, more significant departures may lead to 
different conclusions. See Sections~\ref{sec:extensions}~and~\ref{sec:future}
for more details. 
\subsection{Related Work}
\label{sec:related}
The theory of ML from a single learner's perspective is well developed, but until recently,
little work had studied competition between learning algorithms. Notable exceptions include \cite{wu, bpt}.
We differ from both works by exploring the comparative statics and welfare consequences of a 
single decision (data acquisition). Concurrent work \cite{benporat2019}  studies a game in which learners 
strategically choose their model to compete for users, but users only care about the accuracy of predictions on 
their particular data. In contrast, users in our model choose based on the overall model error. 

Our work also intersects with several strains of economic literature, including industrial 
organization and network effects~\cite{katz1985network,david1987some,economides1996economics}. 
We differ from such models in two key ways. First, in contrast to 
assuming a static equilibrium~\cite{katz1985network} or fixing 
a dynamic but unchanging process at the outset~\cite{salfar}, 
our work can be viewed as an analysis of a shock to a given potentially asymmetric equilibrium 
in the form of the availability of new data. Second, the consumers in our model do not behave 
strategically (see e.g.~\cite{vohra,wu} for more discussion).

Finally, our work is related to spectrum auctions, competition with congestion externalities~\cite{vohra}, 
and the sale of information or patents~\cite{kamien1992optimal,kamien1986fees}. 
Our results primarily share qualitative similarities: the choice of one firm to buy data (spectrum) forces the other to 
do so to avoid losing market share, though it would not have been profitable absent the rival, and actual 
outcomes run counter to consumer preferences (see e.g.~\cite{vohra}). 

\section{Framework}
\label{sec:framework}

We formally motivate and model the ML problem of the firms and demonstrate 
how this reduces to a game in which the firms can either buy or not buy the new data. 

\subsection{Choosing a Model Class}
\label{sec:ml}
Consider a firm using ML to build a service e.g. 
a recommendation system.
The amount of data available to the firm is a crucial determinant to the effectiveness 
of the predictive service of the firm. 
Fixing the amount of data, the firm faces a fundamental tradeoff; it can use a more complex model  
that can fit the data better, but learning using a complicated model requires more training data to 
avoid over- or underfitting.

We can formally represent this tradeoff as follows. Let $\HH$ denote the hypothesis class from which the firm
is selecting its model and assume the data is generated from a distribution $\DD$. 
Then given $m$ i.i.d. draws 
from $\DD$ the error of the firm when 
learning a hypothesis from $\HH$ can be written as
${\err}_{\DD}(\HH) = \err(m, \HH) + \min_{h\in \HH}{\err}_{\DD}(h)$~\cite{sssml}.

The first term, known as \emph{estimation error}, determines how well in 
expectation a model learned with $m$ draws from $\DD$ can 
predict compared to the best model in class $\HH$. 
The second term, known as \emph{approximation error}, determines how well
the best model in class $\HH$ can fit the data generated from $\DD$.

The approximation error is independent of the amount of training data, while 
the estimation error decreases as the volume of training data increases. 
The choice of $\HH$ affects both errors. In particular, fixing the amount of training data, 
increasing the complexity of $\HH$ will increase the estimation error. On the 
other hand, the additional complexity will decrease the approximation error 
as more complicated data generating processes can be fit with more 
complicated models.

Once the amount of data is fixed, the firm can optimize over its 
choice of model complexity to achieve the best error. 
We examine a few widely used ML models and their error forms. 

As a first example, consider the case where the firm is building a neural network and
 has to decide how many nodes $d$ to use. $d$ is the measure 
 of the complexity of the model class and given $m$ data points, the error
 of the model can be written using the following simplification of a result from~\citet{Barron94}.
\begin{lemma}[\citet{Barron94}]
\label{lem:baron}
Let $\HH$ be the class of neural networks with $d$ nodes. Then for any distribution $\DD$, 
with high probability, the error when using $m$ data points to learn a model
from $\HH$ is at most $c_1 d/m + c_2/d,$ for constants $c_1$ and $c_2$.
\end{lemma}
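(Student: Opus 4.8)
The plan is to start from the additive error decomposition already recorded in the text, $\err_\DD(\HH) = \err(m,\HH) + \min_{h\in\HH}\err_\DD(h)$, and to bound the two summands separately by the two competing terms in the claimed expression. The approximation term $\min_{h\in\HH}\err_\DD(h)$ should contribute the $c_2/d$ piece, which shrinks as the network grows, while the estimation term $\err(m,\HH)$ should contribute the $c_1 d/m$ piece, which grows with complexity but shrinks with sample size. Since the constants $c_1,c_2$ are permitted to absorb problem-dependent quantities (and, in this simplification, logarithmic factors), the task reduces to establishing the correct powers of $d$ and $m$ in each of the two terms.

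For the approximation term I would invoke Barron's approximation theorem: any target function whose first absolute Fourier moment is finite can be approximated in mean-squared $L^2(\DD)$ norm by a single-hidden-layer network with $d$ sigmoidal units to accuracy $O(1/d)$. The cleanest route is the probabilistic, Maurey-style argument — write the target as an average over ridge functions drawn from (the measure induced by) its Fourier representation, sample $d$ of them, and show via a second-moment computation that the averaged network has squared error $O(1/d)$. This yields $\min_{h\in\HH}\err_\DD(h) \le c_2/d$ with $c_2$ depending on the Fourier moment of the target.

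For the estimation term I would control the statistical capacity of the class of $d$-node networks: its parameter count, and hence its pseudo-dimension or metric entropy, scales like $O(d)$, and this feeds into a uniform-convergence bound for the squared loss that holds with high probability. The subtlety, and the step I expect to be the main obstacle, is that a naive uniform-convergence argument gives only the slow rate $O(\sqrt{d/m})$, whereas the claimed $c_1 d/m$ is a fast rate. Obtaining it requires exploiting the squared-loss structure (a local Rademacher / offset-complexity refinement, or Barron's original metric-entropy control of the empirical risk), which is also what absorbs the logarithmic factors into $c_1$. Once both bounds hold on a common high-probability event, summing them over the decomposition gives $\err_\DD(\HH) \le c_1 d/m + c_2/d$, as claimed.
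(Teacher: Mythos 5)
The first thing to note is that the paper does not prove this lemma at all: it is a black-box citation, a simplified restatement of a result from Barron (1994), so there is no in-paper argument to compare yours against. Judged on its own terms, your sketch reconstructs the actual structure of Barron's proof correctly: the decomposition into estimation plus approximation error, the Maurey-style probabilistic sampling argument giving approximation error $O(1/d)$ for targets with finite first absolute Fourier moment, and an entropy/capacity-based control of the estimation error at the fast rate $O(d/m)$ (up to logarithmic factors absorbed into $c_1$) rather than the slow rate $O(\sqrt{d/m})$. This matches what Barron's index-of-resolvability / complexity-regularization argument delivers, and the local-Rademacher route you propose is a legitimate modern substitute; it also naturally yields the high-probability form stated in the lemma, whereas Barron's original bound is on the expected risk of a complexity-regularized estimator and would need an extra concentration step.

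There is one substantive gap, which your own writeup half-concedes when you let $c_2$ depend on the Fourier moment of the target: the lemma's phrase ``for any distribution $\DD$'' cannot be recovered by this (or any) argument. Barron's approximation theorem requires the regression function to lie in the Barron class (finite first absolute Fourier moment) and the inputs to have bounded support; for targets outside this class no $c_2/d$ bound with $d$-independent constant holds, and approximation rates can degrade badly with the input dimension. So what your proof establishes --- and what Barron actually proved --- is the lemma under this additional hypothesis on the data-generating process, with $c_1,c_2$ depending on the target function and the input dimension. The paper's statement silently buries this restriction inside the word ``constants''; your proof makes it visible, which is fine, but you should promote it to an explicit assumption rather than leave the quantifier ``for any distribution'' standing, since as literally stated the claim is false.
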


Fixing $m$, the choice of $d$ 
that minimizes the error can be computed by minimizing the bound in Lemma~\ref{lem:baron}
with respect to $d$. This corresponds to $d =  c_2\sqrt{m}/c_1$ and 
we get that the error of the model built by the firm is $\sqrt{c_1c_2/m}$.

As another example, consider the very \emph{simple} setting of \emph{realizable PAC learning}
where the data points are generated by some hypothesis in a fixed hypothesis class. 
\begin{lemma}[\citet{KearnsV94}]
Any algorithm for PAC learning a concept class of VC dimension $d$ must use 
$\Omega(d/\epsilon)$ examples in the worst case.
\end{lemma}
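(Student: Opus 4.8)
The plan is to prove this lower bound via the classical adversarial construction of Ehrenfeucht, Haussler, Kearns, and Valiant, exploiting the single structural fact we are handed: a concept class of VC dimension $d$ admits a shattered set $S=\{x_1,\dots,x_d\}$ on which \emph{every} one of the $2^d$ labelings is realized by some concept in the class. This means an adversary can force the learner to discover the labels of the points of $S$ essentially one at a time, with no generalization possible between them, and the idea is to place the distribution so that extracting enough of these labels requires $\Omega(d/\epsilon)$ draws.

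Concretely, first I would designate $x_1$ as a ``heavy'' point carrying mass $1-8\epsilon$ and spread the remaining $8\epsilon$ uniformly over the $d-1$ ``light'' points $x_2,\dots,x_d$, each receiving mass $8\epsilon/(d-1)$; the distribution is supported entirely on $S$. I would then choose the target concept at random, fixing $c(x_1)$ arbitrarily and drawing $c(x_2),\dots,c(x_d)$ independently and uniformly from $\{0,1\}$ — shattering guarantees each such target lies in the class, so realizability holds. The crux is a counting estimate: for a sample of size $m$, the probability that a fixed light point never appears is $(1-8\epsilon/(d-1))^m$, which stays above an absolute constant whenever $m=O(d/\epsilon)$, so in expectation a constant fraction of the light points go unseen, carrying total probability mass $\Omega(\epsilon)$. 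On any unseen light point the learner's predicted label is independent of the true (random) label, hence wrong with probability $1/2$, and summing over the unseen mass shows that the expected error — over the random target, the random sample, and the learner's own randomness — is at least, say, $2\epsilon$.

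The hard part will be the two reductions that turn this average-case calculation into the worst-case, high-probability statement the lemma asserts. First, an averaging (probabilistic-method) argument extracts a \emph{single fixed} target concept $c^{\ast}$ on which the learner's expected error remains $\ge 2\epsilon$, so that the ``worst case'' quantifier is discharged by an explicit adversary. Second, I would apply a reverse-Markov bound, using that the error is deterministically capped by the total light mass $8\epsilon$: if the error exceeded $\epsilon$ with probability only $p$, then $\mathbb{E}[\mathrm{error}] \le \epsilon + 8\epsilon\,p$, and combining with $\mathbb{E}[\mathrm{error}] \ge 2\epsilon$ forces $p \ge 1/8$, contradicting any PAC guarantee whose confidence parameter $\delta$ is below that constant. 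Assembling the chosen constants then yields the $\Omega(d/\epsilon)$ requirement; the delicate point is calibrating the mass $8\epsilon$ and the threshold $m=O(d/\epsilon)$ so that both the ``constant fraction unseen'' estimate and the reverse-Markov step go through simultaneously.
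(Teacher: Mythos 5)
This lemma appears in the paper purely as a citation to Kearns and Vazirani's textbook; the paper contains no proof of it (it is background used to motivate the $\Theta(1/m)$ learning rate in Section 2.1), so there is no internal proof to compare against. Measured against the cited source, your proposal is essentially the canonical argument: the heavy-point/light-points distribution supported on a shattered set, a randomized target made legal by shattering, the expected-error estimate over unseen light points, an averaging step to extract a single fixed worst-case target, and a reverse-Markov step converting expected error into constant-probability failure. This is exactly the Ehrenfeucht--Haussler--Kearns--Valiant lower bound as presented in Kearns--Vazirani, and your outline of it is structurally correct.

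One step needs repair as written: the learner's total error is \emph{not} ``deterministically capped by the total light mass $8\epsilon$,'' because the output hypothesis may also mislabel the heavy point $x_1$, which carries mass $1-8\epsilon$; consequently the inequality $\mathbb{E}[\mathrm{error}] \leq \epsilon + 8\epsilon\, p$ is false for the true error. The standard fix is to run the entire argument on the error restricted to light points, $L(h)=\sum_{i\ge 2:\,h(x_i)\neq c(x_i)}\Pr_D[x_i]$. Your expectation bound already counts only light points, so it shows $\mathbb{E}[L]\geq 2\epsilon$; the cap $L\leq 8\epsilon$ does hold deterministically, so reverse Markov yields $\Pr[L>\epsilon]\geq 1/8$; and since the true error dominates $L$, the contradiction with any PAC guarantee having $\delta<1/8$ follows. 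With that substitution, and keeping your calibration of the mass $8\epsilon$ against the sample-size threshold $m=O(d/\epsilon)$, the proof is complete.
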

Thus in this setting, in the worst case, firms need $\Theta(1/\epsilon)$ training data 
points to achieve error $\epsilon$. A similar bound gives that with high probability, 
the firms can guarantee error of $\Theta(1/m)$~(see~\cite{KearnsV94}).

In the examples above the error of a firm with $m$ data points
takes the form of either $\Theta(m^{-1/2})$ or $\Theta(m^{-1})$ after the firm optimizes over the choice of 
model complexity. Importantly, the error in both cases (and more generally) degrades  as the number of data points
increases. The rate at which the error degrades is 
commonly known as the \emph{learning rate}. 

There are other learning tasks with learning rates 
different than the examples above. 
Consider a stylized model of a search engine 
where the set of queries is drawn
from a fixed and discrete distribution over a \emph{very large} or even \emph{infinite} set, 
and the search engine can only correctly answer queries that it has seen before. If, as is 
often assumed, the query distribution is heavy-tailed, then the search engine will require a
large training set to return accurate answers. 

In this framework, the probability that a search engine 
incorrectly answers a query drawn 
from the distribution is 
exactly the expectation of the \emph{unobserved} mass of the distribution given the queries 
observed so far. This quantity is known as the \emph{missing mass} of a distribution 
(see e.g.~\cite{BerendK11, Decrouez2018, OrlitskySZ03, Good53}). Lemma~\ref{lem:search}
shows how to bound the expected missing mass for the class of polynomially decaying query distributions.

\begin{lemma}[\citet{Decrouez2018}]
\label{lem:search}
Let $P^k$ for $k>1$ be a discrete distribution with polynomial decay defined over $i\in \N_{\geq 0}$
such that
$\Pr_{x\sim P^k}\left[x = i\right] =  i^{-k}/\Sigma_{j=0}^{\infty} j^{-k}.$
Then the expected missing mass given $m$ draws from $P^k$ is $\Theta(m^{1/k-1})$.
\end{lemma}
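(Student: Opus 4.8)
The plan is to start from the standard closed form for the expected missing mass. Writing $p_i = i^{-k}/Z$ with $Z = \sum_{j\ge 1} j^{-k}$ for the symbol probabilities, a symbol $i$ is absent from the sample exactly when none of the $m$ i.i.d.\ draws lands on $i$, an event of probability $(1-p_i)^m$. Hence the expected missing mass is
\[
\mathbb{E}[M_0(m)] = \sum_{i\ge 1} p_i (1-p_i)^m = \frac{1}{Z}\sum_{i\ge 1} i^{-k}\bigl(1 - i^{-k}/Z\bigr)^m ,
\]
and the entire task is to estimate the growth rate of this sum in $m$.

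First I would replace the binomial factor by its exponential proxy, using the two-sided bounds $(1-p)^m \le e^{-mp}$ and, for the small values of $p$ that dominate, $(1-p)^m \ge e^{-mp/(1-p)}$, so that $\mathbb{E}[M_0(m)]$ is sandwiched between constant multiples of $\sum_{i\ge 1} p_i e^{-m p_i}$. Since $p_i$ is smooth and monotone decreasing in $i$, the summand $x \mapsto \tfrac{x^{-k}}{Z} e^{-m x^{-k}/Z}$ rises and then falls with a single interior maximum near $x \approx (m/Z)^{1/k}$, so the sum can be pinned between the integral $\int_1^\infty \tfrac{x^{-k}}{Z} e^{-m x^{-k}/Z}\,dx$ and that integral plus one maximal term. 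This reduces the problem to evaluating a one-dimensional integral.

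The integral is then handled by the substitution $u = m x^{-k}/Z$, which turns it into
\[
\frac{(m/Z)^{1/k}}{k\,m}\int_0^{m/Z} u^{-1/k} e^{-u}\,du .
\]
Because $k>1$ we have $-1/k > -1$, so the integrand is integrable at the origin, and as $m\to\infty$ the incomplete integral converges to the finite positive constant $\Gamma(1-1/k)$. The prefactor is $\Theta(m^{1/k}/m) = \Theta(m^{1/k-1})$, which is precisely the claimed rate; the constant hidden in $\Theta(\cdot)$ is $\tfrac1k Z^{-1/k}\Gamma(1-1/k)$, up to the matching factors from the exponential sandwich.

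The main obstacle is making the two comparisons uniform in $m$ so that the upper and lower estimates share the same order, not merely the same leading heuristic. Concretely, one must verify that (i) the exponential approximation error $\sum_{i} p_i\bigl(e^{-mp_i}-(1-p_i)^m\bigr)$ is of strictly smaller order than $m^{1/k-1}$ — a term-by-term Taylor estimate gives a contribution of order $m^{1/k-2}$, lower by a factor $1/m$, provided the head of the distribution (small $i$, where $p_i$ is large but $e^{-mp_i}$ is exponentially tiny) is separated from the bulk near the maximizer; and (ii) the sum-to-integral error, bounded by the single peak term $\max_i p_i e^{-m p_i} = \Theta(1/m)$, is $o(m^{1/k-1})$, which holds since $1/m$ decays faster than $m^{1/k-1}$ for every $k>1$. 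Once both error terms are shown to be lower order, the integral computation delivers the matching $\Theta$ bounds.
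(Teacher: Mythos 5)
The paper never proves this lemma: it is imported as a black box from \citet{Decrouez2018}, and the appendix of omitted proofs does not touch it. So there is no internal argument to compare against; your proposal stands or falls on its own, and on its own it is essentially correct. You follow the standard route for missing-mass asymptotics: the exact identity $\mathbb{E}[M_0(m)]=\sum_{i\geq 1} p_i(1-p_i)^m$, the exponential sandwich, a sum-to-integral comparison justified by unimodality of $x\mapsto (x^{-k}/Z)e^{-mx^{-k}/Z}$, and the substitution $u=mx^{-k}/Z$, which isolates the prefactor $\Theta(m^{1/k-1})$ against an incomplete Gamma integral converging to $\Gamma(1-1/k)$ --- finite precisely because $k>1$. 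Your error accounting is also right: the peak term is $\Theta(1/m)=o(m^{1/k-1})$ for every $k>1$, and the Taylor correction to the exponential proxy contributes $O(m^{1/k-2})$, one full factor of $m$ below the main term. You also (correctly) read the support as $i\geq 1$; the paper's statement over $i\in\N_{\geq 0}$ is a typo, since $0^{-k}$ is undefined.

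The one step you should tighten is the lower-bound half of the sandwich. The inequality $(1-p)^m\geq e^{-mp/(1-p)}$ does not give a term-by-term constant-factor comparison with $e^{-mp}$: the ratio is $e^{-mp^2/(1-p)}$, which degenerates as soon as $p\gg m^{-1/2}$, so ``sandwiched between constant multiples'' is not literally true summand by summand. The clean fix is the one you gesture at when you mention separating the head from the bulk: restrict the lower-bound sum to indices with $p_i\leq 1/2$, where $(1-p_i)^m\geq e^{-2mp_i}$; note the discarded head has finitely many terms, each $O(e^{-m/2})$; and observe that replacing $m$ by $2m$ in $\sum_i p_i e^{-2mp_i}$ changes the power-law answer $\Theta(m^{1/k-1})$ only by a constant factor. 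With that patch made explicit, the argument is complete and delivers matching upper and lower bounds of the claimed order.
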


By varying  $k$ in the query distribution of
Lemma~\ref{lem:search}, the learning rate in the search problem
can take the form of $\Theta(m^{-i})$ for any $i\in(0,1)$. Thus, the learning rate for search may 
be much faster or slower compared to the previous examples, and the exact rate depends 
on the value of $k$. 

We saw that given a fixed amount of data, a firm using
ML can optimize over its learning decisions to get the best possible error guarantee. 
Furthermore, while error decays as more data becomes available, the rate of decay
can vary widely depending on the task. We next see how various learning rates 
can be incorporated into the parameters of our game. 
\subsection{Error-Based Market Share} \label{sec:errbms}
\label{sec:dynamics}
Consider two competing firms 
(denoted by Firm 1 and 2) that provide identical services 
e.g. search engines. We assume the market shares of the firms depend on 
their ability to make accurate 
predictions e.g. responding to search queries.
As discussed above, the quality of their models is determined ultimately by 
the size of their training data with a task-dependent learning rate. 
Each firm trains a model on its data and uses its model to provide the service. 
Let $\err_1$ and $\err_2$ denote the \emph{excess} error of the firms for the corresponding models.
Intuitively, these errors measure the quality of the firms' services, so a firm 
with smaller error should have higher market share. We assume each firm captures a 
\textit{market share} proportional to the relative errors of the two models. Formally, 
we define Firm 1 and 2's \emph{error-based} market share as

\small{
\begin{equation}
\label{eq:ms}
\ms_1  = 1-\frac{\err_1^a}{\err_1^{\a}+\err_2^{\a}} = \frac{\err_2^{\a}}{\err_1^{\a}+\err_2^{\a}} \text{ and }
\ms_2  = 1-\ms_1.
\end{equation}}\normalsize
	
The constant $\a\in\N$, which we call the \textit{competition exponent} (inspired by
\textit{Tullock contest}~\cite{Tullock01}),
indicates the \emph{ferocity} of the competition, or how strongly a relative difference in the errors of the 
firms' models translates to a market advantage. As $\a$ gets closer to 0, the tendency is 
towards each firm capturing half of the market, and thus a large difference in the models' 
errors is needed for one firm to gain a significant advantage in the market share.  
Conversely, as $\a$ grows larger, even tiny differences in the models' errors translate 
to massive differences in the market share. (See Figure~\ref{fig:comp-exp}.)

\begin{figure}[ht!]
\centering
\includegraphics[width=3in]{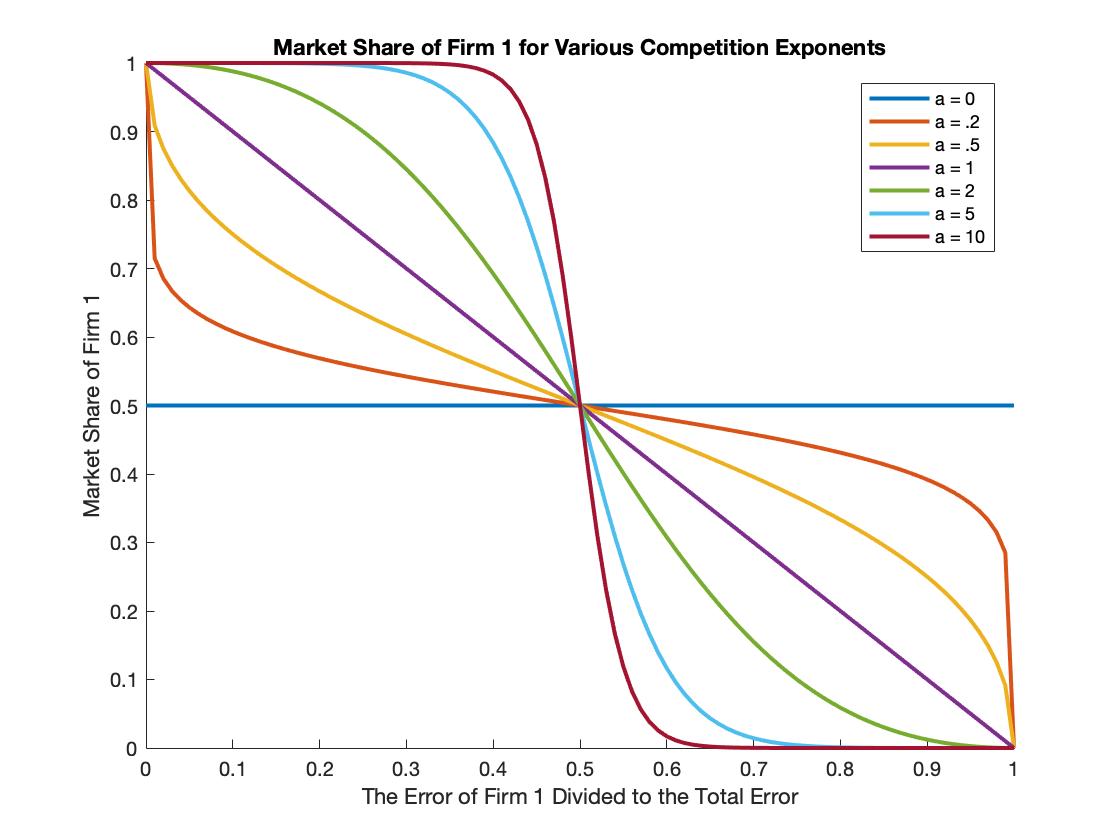}
\caption{\label{fig:comp-exp}Plot of $f=(1-r)^\a/(r^\a+(1-r)^\a)$ for various $\a$ values. 
When $r=\err_1/(\err_1+\err_2)$, $f$ is the market share of Firm 1.}
\end{figure}

An error-based model reflects markets for services which demand extremely low errors, such 
as vision systems for self-driving cars. Under the error-based model, if Firm 1 has $99.99\%$ 
accuracy and Firm 2 has $99\%$ accuracy, Firm 1 will capture $99\%$ of the market share.  
By contrast, an accuracy-based model (i.e. when the market share of Firm 1 is defined as 
$\acc_1^\a/(\acc_1^\a+\acc_2^\a)$) would suggest much less realistic near-even split. 
\begin{figure}[ht!]
\begin{minipage}{0.2\textwidth}
\centering
\begin{tikzpicture}
[scale=0.45, every node/.style={circle,draw=black}, red node/.style = {rectangle, draw=white},]
\node (1) at (0,0){Firm 1};
\node (2) at (4,0){Firm 2};
\draw[->]  [bend right] (1) to (2);
\draw[->] [loop above] (1) to (1);
\draw[->] [bend right] (2) to (1);
\draw[->] [loop above] (2) to (2);
\node [red node] (3) at (2,1.5){$\err_2$};
\node [red node] (4) at (2,-1.5){$\err_1$};
\node [red node] (5) at (0,3.6){$\acc_1$};
\node [red node] (5) at (4,3.6){$\acc_2$};
\end{tikzpicture}
\newline
\end{minipage}
\begin{minipage}{0.25\textwidth}
\centering
\caption{\label{fig:dynamics} Vertices denote the firms and the directed arrows denote 
the probability of transition. $\acc$ is shorthand for accuracy and $\err$ is shorthand for error. }
\end{minipage}
\end{figure}
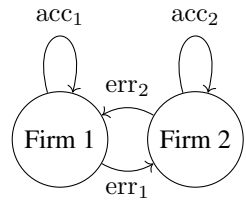

We provide another justification suggesting that an error-based market can arise even when 
the learned model is used to provide an everyday service in which high accuracy is not 
a strict requirement. Consider a customer who, each day, uses the service.  She begins 
by choosing the service of one of the firms uniformly at random. As long as the answers 
she receives are correct, she has no reason to switch to the other firm's service,
and uses the same firm's service tomorrow.  
However, once the firm makes an error,  the customer 
switches to the other firm's service. The transition probabilities are therefore given by 
the accuracy and error of each firm.  
See Figure~\ref{fig:dynamics} for the Markov process representing this example. 

We can think of the market share captured by each firm as the proportion of the days 
on which each firm saw the customer. 
This is exactly the stationary distribution of the associated Markov process as stated
in Lemma~\ref{lem:dynamics}.

\begin{lemma}
\label{lem:dynamics}
Let $\mu_1$ and $\mu_2$ denote the probability mass that the stationary distribution 
of the Markov process in Figure~\ref{fig:dynamics}
assigns to Firms 1 and 2. 
Then $\mu_1 =\err_2/(\err_2+\err_1)$, and $\mu_2=\err_1/(\err_1+\err_2).$
\end{lemma}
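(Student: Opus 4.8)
The plan is to translate the verbal description of the customer's switching behavior into a two-state Markov transition matrix and then solve the stationary equations directly. First I would read off the transition probabilities from Figure~\ref{fig:dynamics} and the surrounding text: from the state ``Firm 1'', the customer remains with probability equal to Firm 1's accuracy $\acc_1 = 1-\err_1$ (the self-loop, since she stays whenever the answer is correct) and moves to ``Firm 2'' with probability $\err_1$ (she switches exactly when Firm 1 errs). Symmetrically, from ``Firm 2'' she stays with probability $\acc_2 = 1-\err_2$ and switches to ``Firm 1'' with probability $\err_2$. This yields the row-stochastic transition matrix
\[
P = \begin{pmatrix} 1-\err_1 & \err_1 \\ \err_2 & 1-\err_2 \end{pmatrix}.
\]

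Next I would impose the stationary condition $\mu P = \mu$ together with the normalization $\mu_1 + \mu_2 = 1$. For a two-state chain this collapses to a single independent equation, which is cleanest to phrase as the flow-balance (detailed-balance) condition: the probability flux leaving state $1$ for state $2$ must equal the flux in the reverse direction, i.e. $\mu_1 \err_1 = \mu_2 \err_2$. Substituting $\mu_2 = 1-\mu_1$ gives $\mu_1(\err_1+\err_2) = \err_2$, and hence $\mu_1 = \err_2/(\err_1+\err_2)$ and $\mu_2 = \err_1/(\err_1+\err_2)$, as claimed. I would then double-check consistency by verifying that the second stationary equation is automatically satisfied, which it is since the two rows of $P-I$ are negatives of one another.

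The computation itself is routine, so the only real subtlety --- the ``hard part,'' such as it is --- lies in justifying that the stationary distribution exists and is \emph{unique}, so that the phrase ``the stationary distribution'' in the statement is well-defined. Here the plan is to observe that whenever $\err_1,\err_2 \in (0,1)$ the chain is irreducible (both cross-edges carry positive probability) and aperiodic (both self-loops carry positive probability), hence ergodic, which guarantees a unique stationary distribution that must therefore equal the vector computed above. I would also flag the degenerate boundary cases: if exactly one error is zero the corresponding state is absorbing and the formula still returns the correct point mass, whereas if both errors vanish the expression is undefined ($0/0$) and every distribution is stationary, so the implicit running assumption $\err_1 + \err_2 > 0$ is required for the statement to be meaningful.
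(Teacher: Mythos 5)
Your proposal is correct and follows essentially the same route as the paper: the paper's proof also writes down the stationary conditions $(1-\err_1)\mu_1 + \err_2\mu_2 = \mu_1$ and $\err_1\mu_1 + (1-\err_2)\mu_2 = \mu_2$ together with $\mu_1+\mu_2=1$ and solves the linear system, which is exactly your flux-balance equation $\mu_1\err_1 = \mu_2\err_2$ after rearranging. Your added discussion of existence, uniqueness via ergodicity, and the degenerate cases $\err_i \in \{0,1\}$ goes beyond the paper's brief sketch but only strengthens the same argument.
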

\begin{proof}[Sketch of the Proof]
By the definition of a stationary distribution, $\mu_1$ and $\mu_2$ should satisfy the 
following conditions: 
\small{
\begin{align*}
(1-\err_1) \ms_1 + \err_2 \ms_2 & = \ms_1 \\
\err_1 \ms_1 + (1-\err_2) \ms_2 & = \ms_2.
\end{align*}}\normalsize
Given that $\ms_1+\ms_2 = 1$ by definition, we can solve the system of linear
equations to compute the market shares.
\end{proof}

Lemma~\ref{lem:dynamics} states that the market share of each firm in the Markov 
process is exactly the error-based market share as defined in Equation~\ref{eq:ms} 
when setting $\a=1$. A similar argument motivates an error-based market share for 
values of $\a\in\N$, where the customer switches firms after experiencing $\a$ mistakes 
in a row. The probability of making $\a$ mistakes in a row is just $\err_i^\a$ for Firm $i$, 
so the stationary distribution of the Markov process is exactly the two error-based market 
shares as defined in Equation~\ref{eq:ms}.

Using our observations from Section~\ref{sec:ml}, 
we can write the error-based market share in the large-data regime as follows.
\begin{theorem}
\label{lem:reduction}
Let $m_1$ and $m_2$ denote the number of data points of Firm 1 and 2, respectively.
Then for some $\b\in\R^{+}$, the market share of Firm 1 can be written (asymptotically) as 
$\ms_1 = m_1^{\b}/(m_1^{\b}+m_2^{\b})$.
\end{theorem}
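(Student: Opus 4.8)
The plan is to reduce the theorem to a short algebraic manipulation of the error-based market share in Equation~\ref{eq:ms}, fed by the asymptotic error rates established in Section~\ref{sec:ml}. First I would invoke the conclusion of that section: once a firm with $m$ data points optimizes over its choice of model complexity, its excess error behaves like $\err(m)=\Theta(m^{-\gamma})$ for a single learning rate $\gamma>0$ fixed by the task (e.g.\ $\gamma=1/2$ for neural networks via Lemma~\ref{lem:baron}, $\gamma=1$ for realizable PAC learning, or $\gamma=1-1/k$ for the polynomially decaying search model of Lemma~\ref{lem:search}). Because the two firms provide identical services, they face the same learning problem, so I would take them to share both the rate $\gamma$ and the leading constant $c$, writing $\err_i = c\,\x_i^{-\gamma}(1+o(1))$ as the data grows, where $\x_i$ is the data of Firm $i$.

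Next I would substitute these expressions into $\ms_1 = \err_2^{\a}/(\err_1^{\a}+\err_2^{\a})$. The common factor $c^{\a}$ cancels between numerator and denominator, leaving a ratio of the powers $\x_i^{-\gamma\a}$. Multiplying numerator and denominator by $(\x\y)^{\gamma\a}$ then flips the signs of the exponents and swaps the roles of the two indices, so that the $\y$ in the numerator turns into $\x$:
\begin{align*}
\ms_1 = \frac{\err_2^{\a}}{\err_1^{\a}+\err_2^{\a}} &= \frac{\y^{-\gamma\a}}{\x^{-\gamma\a}+\y^{-\gamma\a}}\,(1+o(1)) \\
&= \frac{\x^{\gamma\a}}{\x^{\gamma\a}+\y^{\gamma\a}}\,(1+o(1)).
\end{align*}
Setting $\b = \gamma\a$ yields the claimed form $\ms_1 = \x^{\b}/(\x^{\b}+\y^{\b})$ in the limit. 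Since $\gamma>0$ and $\a\in\N$, the exponent $\b=\gamma\a$ ranges over all positive reals, consistent with the statement that $\b\in\R^{+}$.

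The step I expect to require the most care is the modeling justification that the two firms share the same leading constant $c$, and not merely the same rate $\gamma$. If the constants differed, say $c_1\neq c_2$, a surviving factor $(c_2/c_1)^{\a}$ would prevent the market share from collapsing to the clean ratio above, and $\ms_1$ would no longer depend on the data only through $\x^{\b}$ and $\y^{\b}$. The assumption of identical services (the same distribution $\DD$, hence the same optimized error curve) is exactly what licenses $c_1=c_2$, and I would state this explicitly. The only remaining technical point is to make the word ``asymptotically'' precise: the displayed identity holds exactly only in the large-data limit, with the $(1+o(1))$ factors absorbing the lower-order terms hidden inside the $\Theta(\cdot)$ notation.
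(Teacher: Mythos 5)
Your proposal is correct and follows essentially the same route as the paper's proof sketch: substitute the optimized asymptotic error rate $\Theta(m^{-r})$ into the market share formula of Equation~\ref{eq:ms}, cancel and flip the negative exponents, and set $\b = r\a$. Your explicit treatment of the leading constants (requiring $c_1 = c_2$ so that the factor cancels) is a sensible refinement of a point the paper absorbs silently into its ``ignoring lower order terms'' step, but it does not change the argument's structure.
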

\begin{proof}[Sketch of the Proof.]
Depending on the task at hand, we can write the $\err$ of a
firm as $\Theta(m^{-r})$ for some $r\in (0,1]$, where $\err$ refers to the 
excess error of the model with the smallest worst-case error.  Substituting 
this into Equation~\ref{eq:ms} and ignoring lower order terms, which vanish asymptotically,
we get
\small{
\begin{equation*}
\ms_1 = \frac{m_2^{-r\a}}{m_1^{-r\a}+m_2^{-r\a}} = \frac{m_1^{r\a}}{m_1^{r\a}+m_2^{r\a}}.
\end{equation*}}\normalsize
Now let $\b=r\a$. Since $\a$ is a natural number and $r$ is a real
number in $(0,1]$, the combined competition exponent is a real number strictly larger than 0.
\end{proof}

Because $\a$ can be any integer and there exists a corresponding learning problem for any
learning rate in $(0,1]$,
Theorem~\ref{lem:reduction} implies that the combined competition exponent in our game can 
be \emph{any} positive real number,
 motivated by the initial choice
of $\a$ and the learning rate of the firms' ML algorithms. 

The reductions and derivations in Sections~\ref{sec:ml}~and~\ref{sec:dynamics} allow us 
to simplify the acquisition games as follows. We first simplify the actions of each 
firm to only decide whether to buy the data or not, since model choice can be 
optimized once the number of available data points is known. Moreover, 
Theorem~\ref{lem:reduction} not only allows us to simplify the form of market share, but also 
provides us with a meaningful interpretation for any positive (combined) competition exponent. 
\subsection{The Structure of the Game}
\label{sec:game}

Given the reductions so far, we model our 
game as a two-player, one-shot, simultaneous move game.  
Firms 1 and 2 begin the game endowed with an existing number of data points, 
denoted by $\x$ and $\y$, respectively.  Without loss of generality, we assume $\x\geq\y$.  
Each firm must decide whether or not to purchase an 
additional corpus of $\n$ data points\footnote{For simplicity
we assume this data is independent of and identically distributed to the data 
in possession of the firms.}  at a fixed price of $\p$.
The firm can either Buy (denoted by $B$) or Not Buy (denoted by $NB$) the new data.  
If both firms attempt to 
buy the data, the tie is broken uniformly at random (Section~\ref{sec:extensions} discusses 
relaxing the assumption that only one firm may buy the data). 
After the purchase, each firm uses its data to 
train an ML model for its service.  

We assume the particular form
of the market share of Firm 1 using the reduction in Theorem~\ref{lem:reduction}.
The market share of Firm 2 is defined to be one minus the market share of Firm 1.

A \emph{strategy profile} $s$ is a pair of strategies, one for each of the firms. 
Fixing $s$, the utility of Firm $i$ (denoted by $\util_i(s)$) is its market share less
any expenditure. The utility of Firm 1 in all of the strategy profiles of the game is 
summarized in Table~\ref{tab:acq} (rows and columns 
correspond to the actions of Firm 1 and 2). The utility of Firm 
2 is defined symmetrically.  

\begin{table}[h]
\centering
\footnotesize{
\begin{tabular}{ |c|p{.15\textwidth}|p{.15\textwidth}| } 
\hline
Firm 1/Firm 2  & Buy (B) & Not Buy (NB) \\ \hline
Buy (B) & $\tfrac{1}{2}\left(\ms_1(\x+\n, \y, \b)\right.$\newline$+
\left.\ms_1(\x, \y+\n, \b)\right.$\newline$\left. - \p\right)$  &  $\ms_1(\x+\n, \y, \b) - \p$ \\\hline
Not Buy (NB) &  $\ms_1(\x, \y+\n, \b)$ & $\ms_1(\x, \y, \b)$ \\ \hline
\end{tabular}
}
\normalsize
\caption{$\util_1(s)$ in all of the strategy profiles of the game.}
\label{tab:acq} 
\end{table}

A strategy profile is a \emph{pure strategy Nash equilibrium} (pure equilibrium) if no firm can improve its 
utility by taking a different action, fixing the other firm's action. A \emph{mixed strategy 
Nash equilibrium} (mixed equilibrium) is a pair of distributions over the actions (one for each firm) where 
neither firm can improve its expected utility by using
a different distribution over the actions, fixing the other firm's distribution.
We are interested in analyzing the Nash equilibria (equilibria).
\section{Equilibria of the Game}
\label{sec:eq-analysis}

We now turn to finding and analyzing the equilibria. 
First, we introduce some additional notation. Let
\small{\begin{align*}
\A  &=  \frac{\xna}{\xnaya} - \frac{\xa}{\xayna},\\
\C  &=  \frac{\xna}{\xnaya} - \frac{\xa}{\xaya},\\
\D  &=  \frac{\yna}{\xayna} - \frac{\ya}{\xaya}.
\end{align*}}\normalsize
These parameters have intuitive interpretations. $A/2$ is the 
expected change in Firm 1's  
(or Firm 2's) market share when moving the outcome from 
$(NB, B)$ 
(or similarly $(B,NB)$) to $(B, B)$. $\C$ is the change in market share 
that Firm 1 receives if it moves from $(NB,NB)$ to $(B,NB)$, 
and $\D$ is the symmetric relation from the perspective of Firm 2.   

We observe that $\A=\C+\D$. Moreover, since $\C$ and $\D$ are nonnegative,
it is immediately clear that $\A> \max\{\C,\D\}$. 

Finally, when $\x>\y$ (i.e. Firm 1 starts with strictly more data), 
Firm 2 experiences a larger  \textit{absolute}
change in its market share when the outcome changes from 
$(NB,NB)$ to $(NB,B)$ than to 
$(B,NB)$.  In other words, Firm 2 experiences a larger 
\textit{increase} in market share when 
it buys the data compared to the \textit{decrease} it experiences 
when Firm 1 receives the data.
We defer all the omitted proofs to Appendix~\ref{sec:omitted}.
\begin{lemma}
\label{con:c-d}
If $\x > \y$ then for all $\n$ and $\b$ we have that $\C < \D$.
\end{lemma}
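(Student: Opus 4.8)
The plan is to reduce $\C<\D$ to an elementary inequality and then display that inequality as a sum of two manifestly positive terms. First I would put $\C$ and $\D$ over common denominators. Writing the four relevant powers as $\xa,\ya,\xna,\yna$, a short computation telescopes each difference:
\[
\C=\frac{\ya\,(\xna-\xa)}{(\xnaya)(\xaya)},\qquad
\D=\frac{\xa\,(\yna-\ya)}{(\xayna)(\xaya)}.
\]
Every factor in these denominators is strictly positive, so multiplying the inequality $\C<\D$ through by $(\xaya)(\xnaya)(\xayna)>0$ turns it into the polynomial statement
\[
\ya\,(\xna-\xa)(\xayna)\;<\;\xa\,(\yna-\ya)(\xnaya).
\]

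The heart of the argument is the next step. I would set $\Delta_1=\xna-\xa>0$ and $\Delta_2=\yna-\ya>0$ and substitute $\xna=\xa+\Delta_1$ and $\yna=\ya+\Delta_2$ into the difference of the two sides. The purely symmetric piece $2\,\xa\,\ya\,(\xa-\ya)$ then cancels, and collecting the coefficients of $\Delta_1$, $\Delta_2$, and $\Delta_1\Delta_2$ rewrites the desired inequality in the equivalent form
\[
(\xaya)\,(\xa\,\Delta_2-\ya\,\Delta_1)\;+\;(\xa-\ya)\,\Delta_1\Delta_2\;>\;0.
\]
I expect discovering this regrouping to be the main obstacle. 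The naively expanded form invites a case split on the sign of $\Delta_1-\Delta_2$, which changes according to whether $\b<1$ or $\b\ge 1$ (that is, whether the map $m\mapsto(m+\n)^\b-m^\b$ is decreasing or increasing); the grouping above instead holds uniformly for every $\b>0$ and sidesteps the case analysis entirely.

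Finally I would verify positivity of both summands. The second is immediate, since $\xa>\ya$ (because $\x>\y$ and $t\mapsto t^\b$ is increasing) and $\Delta_1,\Delta_2>0$. For the first it suffices to show $\xa\,\Delta_2>\ya\,\Delta_1$, equivalently $\Delta_1/\xa<\Delta_2/\ya$; but $\Delta_1/\xa=(1+\n/\x)^\b-1$ and $\Delta_2/\ya=(1+\n/\y)^\b-1$, and $\x>\y$ forces $\n/\x<\n/\y$, so monotonicity of $t\mapsto t^\b$ delivers exactly this comparison. Both terms being positive, the displayed inequality holds, and therefore $\C<\D$.
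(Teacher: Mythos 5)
Your proof is correct, and it differs from the paper's in a useful way. Both arguments begin identically: clearing denominators reduces $\C<\D$ to the polynomial inequality $\ya\,\Delta_1(\xa+\ya+\Delta_2)<\xa\,\Delta_2(\xa+\ya+\Delta_1)$, where $\Delta_1=\xna-\xa$ and $\Delta_2=\yna-\ya$; the paper writes this (after rescaling by $\n^\b$) as $VW(U+V+Z)<UZ(U+V+W)$ with $U=u^\b$, $V=v^\b$, $W,Z$ the increments. The difference is in how that inequality is established. The paper splits into the cases $\Delta_1\geq\Delta_2$ and $\Delta_1<\Delta_2$ --- precisely the case split you anticipated and sidestepped --- and uses a different sufficient inequality in each case ($\ya\Delta_1<\xa\Delta_2$ in the first, $\ya(\ya+\Delta_2)<\xa(\xa+\Delta_1)$ in the second). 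Your regrouping of the difference as $(\xa+\ya)(\xa\Delta_2-\ya\Delta_1)+(\xa-\ya)\Delta_1\Delta_2$ handles both cases at once, uniformly in $\b$, and its only nontrivial ingredient, $\xa\Delta_2>\ya\Delta_1$, i.e.\ $(1+\n/\x)^\b<(1+\n/\y)^\b$, is exactly the paper's first-case inequality $VW<UZ$, proved by the same monotonicity. So the two proofs share the reduction and the key comparison, but your decomposition buys a shorter, case-free finish. Two minor remarks: your aside that a ``purely symmetric piece $2\xa\ya(\xa-\ya)$'' cancels does not match the actual algebra (no such term arises in the expansion), though this is immaterial since the displayed identity is verified by direct expansion; and, like the paper, you implicitly use $\n>0$ (if $\n=0$ then $\C=\D=0$), which the game's setup guarantees.
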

\subsection{Characterization of the Equilibria}
\label{sec:nash}

The  equilibria of the game clearly depend on the values of the 
parameters $\x$, $\y$, $\n$, $\p$ and $\b$.  
For example, if $\p>1$ ($p\leq 0$), then neither firm should ever 
want to (not) buy the data. 
We observe that, fixing the values of $\x$, $\y$, $\n$ and $\b$, there 
is a range of values for $\p$ where the data is \emph{too expensive} 
(\emph{too cheap}) and $NB$ ($B$) 
is a dominant strategy for both firms. There is also an intermediate 
range of values for $\p$ where more interesting behaviors emerge, as 
formally characterized in Theorem~\ref{thm:eq}.
\begin{theorem}
\label{thm:eq}
$\newline$
\begin{enumerate}
\item When $\p\leq\max\{\C,\D\}$,  $(B, B)$ is the unique  equilibrium.
\item When $\p \geq \A$,  $(NB, NB)$ is the unique  equilibrium.
\item When $\max\{\C,\D\} < \p < \A$,  $(B, B)$ and $(NB, NB)$ are both  equilibria.
Furthermore, there exists a (unique)  mixed  equilibrium 
$((\alpha,1-\alpha), (\beta,1-\beta))$ such that
\small{
\begin{equation*}
\frac{\alpha}{2(1-\alpha)}  =  \frac{\p-\D}{\A-\p} \text{ and }
\frac{\beta}{2(1-\beta)}  =  \frac{\p-\C}{\A-\p},
\end{equation*}}\normalsize
where $\alpha$ and $\beta$ denote the probabilities that Firms 1 
and 2 select the action $B$, respectively.
\end{enumerate}
\end{theorem}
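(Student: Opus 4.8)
The plan is to collapse the entire game onto the three quantities $\A$, $\C$, $\D$ by computing, for each firm, the utility gain from switching to $B$ against each fixed action of its rival, and then to read off every equilibrium from the signs of these gains. First I would extract the four relevant best-response differences from Table~\ref{tab:acq} and its symmetric counterpart, using $\ms_2 = 1-\ms_1$ throughout. For Firm 1 against a rival playing $B$, the gain from $B$ over $NB$ is
\[
\util_1(B,B) - \util_1(NB,B) = \tfrac12\bigl(\ms_1(\x+\n,\y,\b) - \ms_1(\x,\y+\n,\b) - \p\bigr) = \tfrac12(\A - \p),
\]
while against a rival playing $NB$ it is $\ms_1(\x+\n,\y,\b) - \ms_1(\x,\y,\b) - \p = \C - \p$. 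The symmetric computation for Firm 2 (where the $1-\ms_1$ substitution turns the relevant differences into $\A$ and $\D$) gives a gain of $\tfrac12(\A-\p)$ against a rival playing $B$ and $\D - \p$ against a rival playing $NB$. These four expressions, together with the definitions of $\A,\C,\D$, are the only inputs the rest of the argument needs.

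Next I would read off the pure equilibria directly from these signs. $(B,B)$ is an equilibrium exactly when both firms weakly prefer $B$ against $B$, i.e.\ $\A - \p \ge 0$; $(NB,NB)$ is an equilibrium exactly when both firms weakly prefer $NB$ against $NB$, i.e.\ $\p \ge \C$ and $\p \ge \D$, that is $\p \ge \max\{\C,\D\}$. The asymmetric profiles cannot occur: $(B,NB)$ would require Firm 1 to prefer $B$ (so $\p \le \C$) while Firm 2 prefers $NB$ against Firm 1's $B$ (so $\p \ge \A$), which is impossible since $\A = \C+\D > \C$; the profile $(NB,B)$ is excluded symmetrically. This is exactly where the relation $\A > \max\{\C,\D\}$, observed just before the theorem, does the work.

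For the mixed equilibrium I would impose indifference. If Firm 2 buys with probability $\beta$, Firm 1 is indifferent precisely when $\beta\cdot\tfrac12(\A-\p) + (1-\beta)(\C-\p) = 0$, which rearranges to $\frac{\beta}{2(1-\beta)} = \frac{\p-\C}{\A-\p}$; the symmetric indifference condition for Firm 2 gives $\frac{\alpha}{2(1-\alpha)} = \frac{\p-\D}{\A-\p}$. Since $t \mapsto t/(2(1-t))$ is an increasing bijection from $(0,1)$ onto $(0,\infty)$, each equation has a unique solution in $(0,1)$ iff its right-hand side is positive, and both right-hand sides are positive iff $\max\{\C,\D\} < \p < \A$ — exactly regime~3, which is nonempty because $\A > \max\{\C,\D\}$. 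Assembling the three pieces then yields the claim: in regime~1 only $(B,B)$ survives (the other pure profiles fail the sign tests and the mixed solution leaves $(0,1)$), in regime~2 only $(NB,NB)$, and in regime~3 both pure profiles together with the unique mixed profile.

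I expect the main obstacle to be the payoff bookkeeping in the first step: the half-price tie-breaking in the $(B,B)$ cell and the repeated $\ms_2 = 1-\ms_1$ substitutions must be handled carefully so that every difference collapses to a clean multiple of $\A-\p$, $\C-\p$, or $\D-\p$; once that collapse is verified the classification is immediate. A secondary delicate point is the behavior at the knife-edge prices $\p = \max\{\C,\D\}$ and $\p = \A$, where a best-response difference vanishes and a pure profile becomes only weakly stable; these boundary points must be checked separately to confirm they fall into the regimes as the weak inequalities in the statement prescribe.
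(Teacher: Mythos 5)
Your proposal is correct and follows essentially the same route as the paper: the four deviation gains you compute, $\tfrac{1}{2}(\A-\p)$, $\C-\p$, and $\D-\p$, are exactly the arrow labels in the paper's flow diagrams, the exclusion of asymmetric pure profiles via $\A = \C + \D > \max\{\C,\D\}$ is the same observation the paper makes just before the theorem, and the mixed equilibrium is derived from the identical indifference conditions. The only difference is presentational (algebraic sign analysis versus the paper's graphical flow-diagram bookkeeping), and your closing caution about the knife-edge prices $\p = \max\{\C,\D\}$ and $\p = \A$ is warranted, since the paper's own proof treats only the open intervals.
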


\begin{proof}
We use flow diagrams to analyze the equilibria of the game 
(see~e.g.~\cite{CandoganMOP11} for more details on this technique).
As a tutorial of this flow diagram argument, we carefully analyze the diagram for 
the regime of our game in which $\p<\min\{\C,\D\}$ as depicted in the top left 
panel of Figure~\ref{fig:eq-regime1}.

In a flow diagram, each vertex corresponds to a strategy profile. 
An arrow indicates that one player changes its strategy while the 
other's action is fixed. 
In particular, in Figure~\ref{fig:eq-regime1} vertical (horizontal) arrows demonstrate 
the change of strategy for Firm 1 (Firm 2).
The numerical value above the arrow indicates how much a player gains by a deviation, 
and arrows are oriented so that they always point in the direction of nonnegative gain. 
The leftmost vertical arrow indicates that Firm 1 increases 
its utility by $(\A-\p)/2$ by changing its decision from $NB$ to $B$, 
fixing that Firm 2 is committed to playing $B$.  Similarly, the rightmost 
vertical arrow indicates that Firm 1 increases its utility by $\C-\p$ when 
it makes this change, fixing that Firm 2 is committed to playing $NB$. 
The horizontal arrows are the symmetric results for Firm 2, fixing the action 
of Firm 1.  The topmost arrow indicates the increase in utility when moving 
from $NB$ to $B$ when Firm 1 plays $B$, and the bottommost 
corresponds to the increase in utility for the same change of action when Firm 1 
plays $NB$.

This particular flow diagram models the regime of the game where the price is 
sufficiently low such that $(B,B)$ is the unique pure  
equilibrium. Consider the profile $(B,B)$. Since arrows only point at, rather than 
originate from, $(B,B)$, unilateral deviations from $(B,B)$ are unprofitable for both players. 
Hence, $(B,B)$ is a pure strategy equilibrium in this regime. Furthermore, 
there is no other pure equilibrium
because there are no other `sinks' in the top left panel of Figure~\ref{fig:eq-regime1}.
Moreover, no mixed equilibrium exists. To see this, note that in a mixed equilibrium, 
a player mixing can only mix over best responses. But since the arrows representing 
Firm 2's deviations both point towards $(B, NB)$ is dominated by $B$; hence $NB$ 
cannot be a best response, so Firm 2 cannot be mixing. But since Firm 1 is not 
indifferent between $B$ and $NB$ if Firm 2 chooses $B$, Firm 1 will not mix either. 
More generally, this logic means that mixed equilibria require arrows pointing in opposite directions.

Similar logic allows us to easily analyze the continuum of games induced as 
allow $p$ varies monotonically. Every value of $p$ induces exactly one of the 
flow diagrams in Figure~\ref{fig:eq-regime1}. Thus, characterizing the equilibria 
in each flow diagram characterizes the equilibria of the different parameter regimes.

\begin{figure*}[ht!]
\centering
\begin{tikzcd}
(B, B) & (B, NB) \arrow{l}[swap]{\frac{1}{2}(\A-\p)}  \\
(NB, B) \arrow{u}{\frac{1}{2}(\A-\p)} & \arrow{l}{\D-\p}  (NB, NB)  \arrow{u}[swap]{\C-\p}
\end{tikzcd}
\begin{tikzcd}
(B, B) & (B, NB) \arrow{l}[swap]{\frac{1}{2}(\A-\p)}  \arrow{d}{\p-\C} \\
(NB, B) \arrow{u}{\frac{1}{2}(\A-\p)} & \arrow{l}{\D-\p}  (NB, NB) 
\end{tikzcd}
\begin{tikzcd}
(B,B) & (B,NB) \arrow{l}[swap]{\frac{1}{2}(\A-\p)}  \arrow{d}{\p-\C} \\
(NB,B) \arrow{u}{\frac{1}{2}(\A-\p)} \arrow{r}[swap]{\p-\D} & (NB,NB) 
\end{tikzcd}
\begin{tikzcd}
(B, B) \arrow{r}{\frac{1}{2}(\p-\A)} \arrow{d}[swap]{\frac{1}{2}(\p-\A)} & 
(B, NB)   \arrow{d}{\p-\C} \\
(NB, B)  \arrow{r}[swap]{\p-\D} & (NB, NB) 
\end{tikzcd}
\caption{\label{fig:eq-regime1} The flow diagrams for different parameter regimes. 
Top left panel when $\p\in (-\infty, \min\{\C,\D\})$.
Top middle when $\p\in (\min\{\C,\D\},\max\{\C,\D\})$.
Top right panel when $\p\in (\max\{\C,\D\}, \A)$. Bottom panel 
when $\p\in (A,+\infty)$.}
\end{figure*}
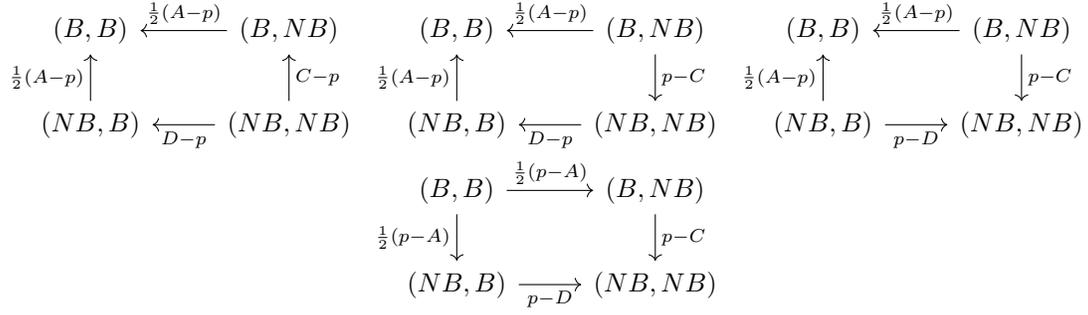
\textbf{(1) $\p\in (-\infty, \min\{\C,\D\})$}: The top left panel of 
Figure~\ref{fig:eq-regime1} represents the flow diagram in this regime
and we can see that the only equilibrium is the pure strategy of $(B, B)$.
	
\textbf{(2) $\p\in(\min\{\C,\D\},\max\{\C,\D\})$}: The top middle 
panel of the Figure~\ref{fig:eq-regime1} represents the flow 
diagram in this regime. By Lemma~\ref{con:c-d},
$(\min\{\C,\D\},\max\{\C,\D\})\equiv (C,D)$.
Again we can see that the only 
equilibrium is the pure strategy $(B, B)$.
	
\textbf{(3) $\p\in(\max\{\C,\D\},\A)$}: The top right panel of 
Figure~\ref{fig:eq-regime1} represents the flow diagram in this regime.
There are two pure equilibria: $(B, B)$ and $(NB, NB)$. 
There also exists a mixed equilibrium.
In a mixed equilibrium, both players are randomizing, 
and thus must be indifferent between the pure 
strategies they are randomizing over; this condition allows us to 
solve for the mixed strategies. 

Let $\alpha$ denote the probability that Firm 1 is playing B. 
Then in a mixed equilibrium, Firm 2 is indifferent between the two actions. Therefore,
\small{
\begin{align*}
&\frac{\alpha}{2}\left(\frac{\yna}{\xayna} + \frac{\ya}{\xnaya}-\p\right)\\
&\qquad+\left(1-\alpha\right) \left(\frac{\yna}{\xayna}-\p\right) = \\
&\alpha\left(\frac{\ya}{\xnaya}\right)  + \left(1-\alpha\right)\left(\frac{\ya}{\xa+\ya}\right).
\end{align*}}\normalsize
By rearranging we get that
\small{
\begin{align*}
\frac{\alpha}{2(1-\alpha)}
= \frac{\p-\D}{\A-\p},
\end{align*}}\normalsize
as claimed. 
	
Similarly let $\beta$ denote the probability that Firm 2 is playing B. 
Then in a mixed equilibrium, Firm 1 is indifferent between 
the two actions. With a similar calculation we can show that
\small{
\begin{align*}
\frac{\beta}{2(1-\beta)} 
&= \frac{\p-\C}{\A-\p},
\end{align*}}\normalsize
as claimed.
	
\textbf{(4) $\p\in(A,+\infty)$}: The bottom panel of Figure~\ref{fig:eq-regime1} 
represents the flow diagram in this regime
and we can see that the only equilibrium is the pure strategy of $(NB, NB)$.
\end{proof}
Theorem \ref{thm:eq} allows us to make several key observations about the market 
structure of this game. First, since $C$ and $D$ represent the maximum increase in 
the market share firms could achieve by buying the data, the fact that the only equilibrium 
when $p \in [\min\{C,D\},\max\{C,D\}]$ is $(B,B)$ means that both firms buy the data despite 
the fact that the \emph{best-case improvement} in market share is less than what they pay. 
This `race for data' thus has the character of a prisoner's dilemma -- if both firms could agree 
not to buy the data, they would be better off, but either would be tempted to buy the data and 
improve market share. 

Second, Theorem \ref{thm:eq} illustrates how several features of equilibrium depend on the 
ferocity of competition, as determined by the exponent $\a$; 
as $\a$ varies, the frontiers of the regimes described in Theorem \ref{thm:eq} shift too. 
For example, in the case that $\a=0$, market 
share is split evenly between the two firms, regardless of error or accuracy; unsurprisingly, 
as $\a \rightarrow 0$ (which implies $\b \rightarrow 0$), $\A, \C$, and $\D$ also approach 
$0$, so the payoff difference between strategy profiles becomes negligible. 
As a consequence, the regimes (1), (2), and (3) collapse, and all but very small $p$ induce 
regime (4), where $(NB,NB)$ is the only equilibrium. Thus, for small $\a$, 
unless $\p$ is very close to zero, $(NB,NB)$ is the only equilibrium. We observe similar behavior 
when $\a$ is large. Assuming that $\x>\y+\n$, then $\a \rightarrow \infty$ implies that 
$\A \rightarrow 0$ (and hence $\b \rightarrow \infty$), 
again implying that regimes (1), (2), and (3) collapse. Thus again, unless $\p$ is very 
close to $0$, $(NB, NB)$ 
is the unique equilibrium. This is for a different reason than the $\a$ small case, however: Firm 
1 now has no incentive to buy, since it is guaranteed almost the whole market share using its 
current model. Moreover, in this scenario, Firm 2's initial disadvantage is too great to be 
overcome by buying the data. 

If $\a$ is in between these two extremes, many choices of $\x$ and $\y$ lead to a 
non-empty interval $(\max\{\C, \D\}, \A)$, with endpoints far from $0$ and $1$. When $\p$ 
falls in this interval, regime (2) holds, so a mixed 
equilibrium exists; we discuss solving for  this mixed equilibrium in Section~\ref{sec:eqanalysis-mono}.
The complete characterization of the equilibria for all regimes of $p$ in Theorem \ref{thm:eq} 
allows us to pin down the optimal fixed price from the perspective of maximizing seller's revenue. 
However, in full generality, the seller's problem encompasses further possibilities like auction pricing; 
hence, we defer this calculation to future work. See Section~\ref{sec:future} for a discussion. 
\subsection{Mixed Equilibrium and Monotonicity Analysis}
\label{sec:eqanalysis-mono}

Next, we carefully examine the mixed equilibrium and study the relationship between 
the weights each firm places on each action and the parameters of the game.

Recall that $\alpha$ and $\beta$ in Theorem~\ref{thm:eq} denote the probability 
that Firms 1 and 2 purchase the data in the mixed equilibrium. 
When $\x>\y$, then $\alpha < \beta$ which implies that the 
smaller firm will succeed more often in purchasing the data in the mixed 
 equilibrium. The relationship of $\alpha$ and $\beta$ with the number of data points 
$\n$ and the price $\p$ is as follows.
\begin{lemma}
\label{lem:mixed-eq-mono}
Let $((\alpha,1-\alpha), (\beta,1-\beta))$ denote the mixed equilibrium in 
the the regime where $\max\{\C,\D\} < \p < \A$.
Then $\alpha$ and $\beta$, both increase when $\p$ increases or $\n$ decreases.
\end{lemma}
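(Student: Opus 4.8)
The plan is to reduce the entire statement to the monotonicity of two explicit scalar ratios. The starting point is the formulas from Theorem~\ref{thm:eq}, namely $g(\alpha) = (\p-\D)/(\A-\p)$ and $g(\beta) = (\p-\C)/(\A-\p)$, where I write $g(t) = t/\bigl(2(1-t)\bigr)$. First I would observe that $g$ is strictly increasing on $[0,1)$ (its derivative is $1/\bigl(2(1-t)^2\bigr) > 0$), hence invertible and order-preserving. Consequently $\alpha = g^{-1}\!\bigl((\p-\D)/(\A-\p)\bigr)$ and $\beta = g^{-1}\!\bigl((\p-\C)/(\A-\p)\bigr)$, so it suffices to prove that each right-hand ratio increases when $\p$ increases and when $\n$ decreases. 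Throughout I will use that the regime $\max\{\C,\D\}<\p<\A$ forces $\p-\C>0$, $\p-\D>0$, and $\A-\p>0$, so every numerator and denominator below is strictly positive, together with the identity $\A=\C+\D$ noted before Lemma~\ref{con:c-d}.

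For the dependence on $\p$, I would fix $\n$, so that $\C,\D,\A$ are constants, and view $(\p-\D)/(\A-\p)$ as a function of $\p$ alone on $(\max\{\C,\D\},\A)$. Here the numerator $\p-\D$ is positive and increasing while the denominator $\A-\p$ is positive and decreasing, so the quotient is increasing; equivalently its derivative is $\C/(\A-\p)^2>0$. The identical argument applied to $(\p-\C)/(\A-\p)$, whose derivative is $\D/(\A-\p)^2>0$, settles the claim for $\beta$. Applying $g^{-1}$ then gives that $\alpha$ and $\beta$ increase in $\p$.

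For the dependence on $\n$, I would fix $\p$ and first show that $\C$ and $\D$ are each increasing in $\n$. Writing $\C = \frac{\xna}{\xnaya} - \frac{\xa}{\xaya}$, only the first term depends on $\n$, and it has the form $u/(u+\ya)$ with $u=\xna$ increasing in $\n$; since $u\mapsto u/(u+\ya)$ is increasing, $\C$ is increasing in $\n$, and symmetrically so is $\D$, whence $\A=\C+\D$ is increasing in $\n$ as well. Now in $(\p-\D)/(\A-\p)$ the numerator $\p-\D$ is positive and decreasing in $\n$ while the denominator $\A-\p$ is positive and increasing in $\n$, so the quotient is decreasing in $\n$, i.e.\ it increases as $\n$ decreases; the same holds for $(\p-\C)/(\A-\p)$. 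Applying $g^{-1}$ yields the conclusion.

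The main (and essentially only) obstacle is the $\n$-dependence, because there $\C$, $\D$, and $\A$ all move at once and one must rule out cancellation. The key simplification I would emphasize is that $\C$ and $\D$ move in the \emph{same} direction (both increasing in $\n$), so in each ratio the numerator and denominator move oppositely and monotonicity is preserved without any delicate estimate. I would therefore isolate a one-line lemma---a positive increasing function divided by a positive decreasing function is increasing---and invoke it for both variables and both ratios, which avoids grinding through derivatives of the quotients entirely.
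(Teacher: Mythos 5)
Your proof is correct and follows essentially the same route as the paper's: both reduce the claim to the monotonicity of the ratios $(\p-\D)/(\A-\p)$ and $(\p-\C)/(\A-\p)$ via the increasing map $t \mapsto t/(2(1-t))$, then use that $\p$ does not appear in $\A,\C,\D$ (for the price direction) and that $\A$, $\C$, $\D$ all increase in $\n$ (for the data direction). The only difference is that you explicitly verify that $\C$ and $\D$ (hence $\A=\C+\D$) are increasing in $\n$, a step the paper asserts without proof, so your write-up is if anything slightly more complete.
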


Lemma \ref{lem:mixed-eq-mono} may seem counterintuitive as
it implies that as the price $\p$ \textit{rises} through the range in which a mixed equilibrium 
exists, the probability that any of the firms want to buy the data \textit{also increases}.  
However, once the price $p$ crosses the threshold $\A$, the unique  equilibrium is the pure 
strategy $(NB, NB)$. This gives rise to a discontinuity. See Figure~\ref{fig:dis}.

\begin{figure}[ht!]
\centering
\includegraphics[width=3in]{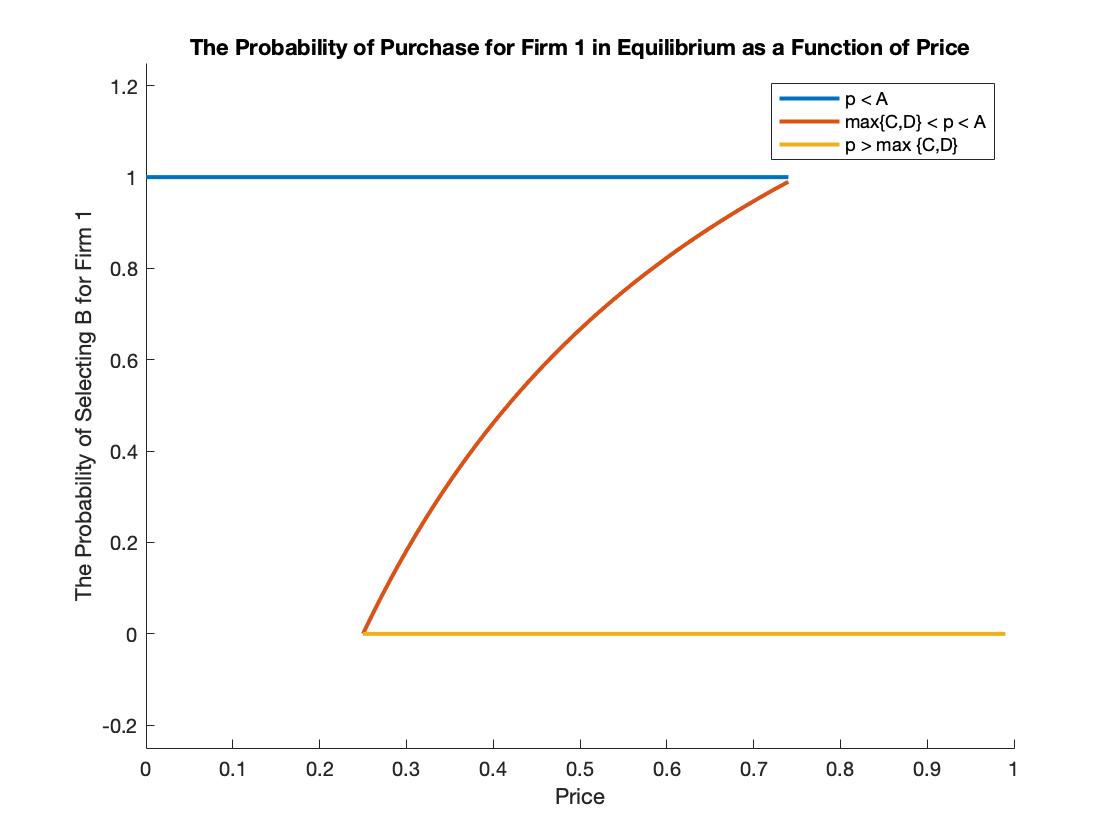}
\caption{\label{fig:dis} The probability of selecting $B$ in equilibrium for 
Firm 1 as a function of $p$. 
The blue, red and yellow lines correspond to the $(B,B)$, mixed and $(NB, NB)$ 
equilibria.}
\end{figure}

Of course, this all says nothing about the equilibrium utilities for the firms; 
as long as the equilibrium utilities are not identical, players will naturally have ordinal 
preferences over the set of equilibria. We analyze these preferences in 
Lemma~\ref{lem:tradewar}, which elucidates the discontinuity at $\p=\A$.

\begin{lemma}
 \label{lem:tradewar}
When $\p\in(\max\{\C,\D\},\A)$, $\util_1(NB, NB) \geq \util_1(s)$  
and $\util_2(NB, NB) \geq \util_2(s)$
for all strategy profiles $s$.  
However, $\util_1(B, NB) \geq \util_1(B, B) \geq \util_1(NB, B)$, 
while $\util_2(NB, B) \geq \util_2(B, B) \geq \util_2(B,NB)$. 
\end{lemma}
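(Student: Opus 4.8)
The plan is to reduce everything to three market-share quantities and then read off every required inequality as a comparison between $\p$ and one of $\A,\C,\D$. Write $P=\ms_1(\x+\n,\y,\b)$, $Q=\ms_1(\x,\y+\n,\b)$, and $R=\ms_1(\x,\y,\b)$ for Firm 1's share when, respectively, only Firm 1 has the extra data, only Firm 2 does, and neither does. With this notation the defining relations give $\C=P-R$, $\D=R-Q$, and $\A=P-Q=\C+\D$, and since market share is zero-sum, Firm 2's share in any profile is one minus Firm 1's. Reading Table~\ref{tab:acq} and its symmetric counterpart, I would first record all eight payoffs: $\util_1(B,NB)=P-\p$, $\util_1(NB,B)=Q$, $\util_1(NB,NB)=R$, $\util_1(B,B)=\tfrac{1}{2}(P+Q-\p)$, together with $\util_2(NB,B)=(1-Q)-\p$, $\util_2(B,NB)=1-P$, $\util_2(NB,NB)=1-R$, and $\util_2(B,B)=1-\tfrac{1}{2}(P+Q)-\tfrac{\p}{2}$.

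For the first claim (that $(NB,NB)$ is Pareto-best), I would verify the three comparisons for each firm. For Firm 1, $\util_1(NB,NB)\ge\util_1(B,NB)$ is $\p\ge\C$; $\util_1(NB,NB)\ge\util_1(NB,B)$ is $R\ge Q$, i.e. $\D\ge 0$; and $\util_1(NB,NB)\ge\util_1(B,B)$ rearranges to $\p\ge\C-\D$. All three follow from $\p>\max\{\C,\D\}$ together with the nonnegativity of $\C$ and $\D$ noted just before Lemma~\ref{con:c-d}. The three comparisons for Firm 2 are the mirror images and reduce to $\p\ge\D$, to $\C\ge 0$, and to $\p\ge\D-\C$, each again immediate in this regime.

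For the second claim I would compute the consecutive gaps directly. For Firm 1, both $\util_1(B,NB)-\util_1(B,B)$ and $\util_1(B,B)-\util_1(NB,B)$ equal $\tfrac{1}{2}(\A-\p)$, which is positive exactly because $\p<\A$; for Firm 2 the two gaps in the chain $\util_2(NB,B)\ge\util_2(B,B)\ge\util_2(B,NB)$ likewise each equal $\tfrac{1}{2}(\A-\p)$. Thus the entire second statement collapses to the single hypothesis $\p<\A$. As a sanity check, these four gaps are precisely the arrow labels in the top-right panel of Figure~\ref{fig:eq-regime1}.

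The argument is essentially bookkeeping, so the only real obstacle is getting Firm 2's symmetric payoffs right: in particular, keeping straight which of $\C$ and $\D$ governs which firm's deviation, and remembering that Firm 2 pays $\p$ exactly in the profiles $(NB,B)$ and $(B,B)$. Once the eight payoffs are tabulated correctly, every inequality is linear in $P$, $Q$, $R$, $\p$ and reduces to a comparison of $\p$ against one of $\A$, $\C$, or $\D$, so no genuine case analysis or further inequality manipulation is needed.
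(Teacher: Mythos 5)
Your proposal is correct and takes essentially the same route as the paper's proof: both reduce every comparison to checking $\p$ against $\A$, $\C$, or $\D$ via the payoff table and the identity $\A=\C+\D$. The only organizational difference is that the paper exploits the midpoint identity $\util_1(B,B)=\tfrac{1}{2}\left(\util_1(B,NB)+\util_1(NB,B)\right)$ to deduce each firm's full ordinal chain from just two endpoint inequalities ($\p\geq\C$ or $\p\geq\D$, and $\A\geq\p$), whereas you tabulate all eight payoffs in terms of $P$, $Q$, $R$ and verify each gap directly, noting that all four consecutive gaps equal $\tfrac{1}{2}(\A-\p)$ --- the same quantities labeling the arrows in the flow diagram.
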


While both firms agree that $(NB,NB)$ is the most preferred outcome, their preferences 
over the remaining three outcomes are discordant.  In particular, given that at least one 
firm will try to buy the data, each firm would prefer itself to be the buyer rather than the 
opponent.  If either firm believes the other may try to buy the data, it will put positive weight 
on the action $B$ in the mixed equilibrium.  Once the price crosses $\A$, both firms know 
that it would be irrational for the other to buy, so we see a unique pure equilibrium of $(NB,NB)$. 
\subsection{Change in the Market Share}
\label{sec:ms-change}

We now analyze the change in the market shares.
\begin{lemma}
\label{lem:ms-change}
When $\x\geq\y$, 
the only strategy profile that strictly increases 
the market share of Firm 1 
is $(B, NB)$. 
\end{lemma}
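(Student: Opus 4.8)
The plan is to compare the (expected) market share of Firm 1 under each of the four strategy profiles against the status quo share $\ms_1(\x,\y,\b)=\tfrac{\xa}{\xaya}$, which is exactly the share Firm 1 receives under $(NB,NB)$, and show that only $(B,NB)$ beats it. The pivotal observation is that the quantities $\C$ and $\D$ are precisely the signed changes in Firm 1's share relative to this baseline. By the definition of $\C$ we have $\ms_1(\x+\n,\y,\b)=\tfrac{\xna}{\xnaya}=\ms_1(\x,\y,\b)+\C$, so $(B,NB)$ shifts Firm 1's share by $+\C$. Since $\ms_1=1-\ms_2$, the profile $(NB,B)$ gives $\ms_1(\x,\y+\n,\b)=1-\tfrac{\yna}{\xayna}=\ms_1(\x,\y,\b)-\D$, a shift of $-\D$. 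Both $\C>0$ and $\D>0$ hold for $\n\geq 1$ and $\b>0$, as each is a difference of values of a map of the form $t\mapsto t^\b/(t^\b+c)$ evaluated at the larger versus the smaller of two data counts.

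With these two identities in hand, three of the four profiles are immediate. The profile $(NB,NB)$ leaves the share unchanged, so it is not a strict increase. The profile $(B,NB)$ raises the share by $\C>0$, so it \emph{is} a strict increase. The profile $(NB,B)$ lowers the share by $\D>0$, a strict decrease, hence not an increase. The only substantive case is $(B,B)$, where the tie is broken uniformly and the expected share is $\tfrac{1}{2}\bigl(\ms_1(\x+\n,\y,\b)+\ms_1(\x,\y+\n,\b)\bigr)$. Substituting the two identities collapses this to $\ms_1(\x,\y,\b)+\tfrac{1}{2}(\C-\D)$, so the sign of the change under $(B,B)$ is exactly the sign of $\C-\D$.

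At this point I would invoke Lemma~\ref{con:c-d}: when $\x>\y$ we have $\C<\D$, so the change is strictly negative, and when $\x=\y$ symmetry forces $\C=\D$, so the change is zero. In either case, across the whole regime $\x\geq\y$, the profile $(B,B)$ never strictly increases Firm 1's share. Combining with the three easy profiles shows that $(B,NB)$ is the unique profile producing a strict increase, which is the claim.

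The main obstacle is the $(B,B)$ case, the only one whose sign is not obvious, since it blends a gain ($\C$) against a loss ($\D$); the work lies in recognizing that it reduces cleanly to $\tfrac{1}{2}(\C-\D)$ so that Lemma~\ref{con:c-d} settles it. The only real care needed beyond that is at the boundary $\x=\y$, where $(B,B)$ yields exact equality rather than a strict decrease — so the reading of ``strictly increases'' must correctly exclude a no-change outcome, not just a decrease.
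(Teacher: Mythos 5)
Your proof is correct and follows essentially the same route as the paper's: identify the change in Firm 1's share under each profile as $+\C$, $-\D$, $0$, and $\tfrac{1}{2}(\C-\D)$ respectively, then invoke Lemma~\ref{con:c-d} to settle the sign of the $(B,B)$ case. Your explicit handling of the boundary $\x=\y$ (where symmetry gives $\C=\D$ and the change is exactly zero) is slightly more careful than the paper, which simply writes $\D\geq\C$, but the argument is the same.
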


So while only $(B, NB)$ leads to an increase in the market share
of Firm 1, it is not a pure equilibrium. 
We show that  
even when firms play according to the mixed equilibrium, 
the expected market share of Firm 1 does not strictly increase.
\begin{theorem}
\label{thm:ms-change}
When $\x\geq \y$ and $\p\in(\max\{\C,\D\}, A)$,
the expected market share of Firm $1$ does not strictly increase if 
both firms play according to the mixed equilibrium.
\end{theorem}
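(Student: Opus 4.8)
The plan is to compute Firm~1's expected market share under the mixed equilibrium directly, as a convex combination over the four strategy profiles, and to show the result never exceeds the baseline $\ms_1(\x,\y,\b)$. Writing $\ms_1(B,NB)$, $\ms_1(NB,B)$, $\ms_1(B,B)$, $\ms_1(NB,NB)$ for Firm~1's market share at each profile, I would first record each of these in terms of the baseline and the scalars $\C$ and $\D$. Using the interpretations of $\C,\D$ together with $\A=\C+\D$, one gets $\ms_1(B,NB)=\ms_1(\x,\y,\b)+\C$, $\ms_1(NB,B)=\ms_1(\x,\y,\b)-\D$, $\ms_1(NB,NB)=\ms_1(\x,\y,\b)$, and (averaging over the uniform tie-break) $\ms_1(B,B)=\ms_1(NB,B)+\A/2=\ms_1(\x,\y,\b)+(\C-\D)/2$. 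The value of this step is that it eliminates all the explicit rational market-share expressions and reduces the entire argument to the two scalars $\C,\D$, the baseline, and the equilibrium weights.

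Next, since the firms randomize independently with $B$-probabilities $\alpha$ and $\beta$, the expected market share is $\alpha\beta\,\ms_1(B,B)+\alpha(1-\beta)\ms_1(B,NB)+(1-\alpha)\beta\,\ms_1(NB,B)+(1-\alpha)(1-\beta)\ms_1(NB,NB)$. Subtracting the baseline and collecting the $\alpha\beta$ terms, I expect to obtain
\[
\Delta \;:=\; \bar\ms_1 - \ms_1(\x,\y,\b) \;=\; \alpha\C - \beta\D + \tfrac{1}{2}\alpha\beta(\D-\C),
\]
where $\bar\ms_1$ denotes the expected market share of Firm~1. Introducing $\delta:=\D-\C\ge 0$ (nonnegative by Lemma~\ref{con:c-d}), this regroups cleanly as
\[
\Delta \;=\; (\alpha-\beta)\,\C \;-\; \tfrac{1}{2}\,\beta(2-\alpha)\,\delta.
\]

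The final step is to read off nonpositivity term by term. Both $\C$ and $\D$ are nonnegative, and $\alpha,\beta\in(0,1)$, so $2-\alpha>0$ and $\beta>0$; the decisive input is the ordering $\alpha\le\beta$, which follows from $\C\le\D$ (Lemma~\ref{con:c-d}) via the equilibrium ratio conditions of Theorem~\ref{thm:eq} and the monotonicity of $t\mapsto \frac{t}{2(1-t)}$ on $(0,1)$, exactly as noted in the discussion preceding Lemma~\ref{lem:mixed-eq-mono}. Hence $(\alpha-\beta)\C\le 0$ and $-\tfrac{1}{2}\beta(2-\alpha)\delta\le 0$, giving $\Delta\le 0$, with equality precisely when $\x=\y$ (so $\C=\D$, $\delta=0$, $\alpha=\beta$). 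This matches the ``does not strictly increase'' phrasing of the statement.

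The main obstacle is avoiding a false start rather than any deep difficulty: the natural temptation is to substitute the closed-form expressions for $\alpha$ and $\beta$ from Theorem~\ref{thm:eq}, which turns $\Delta$ into an unwieldy rational function of $\p,\C,\D$. The key realization is that those closed forms are unnecessary; only the ordering $\alpha\le\beta$, the nonnegativity of $\C,\D$, and $\alpha<1$ are needed once $\Delta$ is grouped as above. The one place demanding care is the correct treatment of the $(B,B)$ entry: it is the tie-break \emph{average} of $\ms_1(B,NB)$ and $\ms_1(NB,B)$ (not a ``both acquire the data'' outcome), and verifying the identity $\ms_1(B,B)=\ms_1(\x,\y,\b)+(\C-\D)/2$ is exactly where $\A=\C+\D$ is used.
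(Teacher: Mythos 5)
Your proof is correct, but it closes the argument by a genuinely different mechanism than the paper. Both proofs begin identically: the expected change in Firm 1's market share under the mixed equilibrium is the probability-weighted sum of the per-outcome changes $\C$, $-\D$, $0$, and $(\C-\D)/2$. From there, the paper substitutes the closed forms $\alpha = 2(\p-\D)/(\A+\p-2\D)$ and $\beta = 2(\p-\C)/(\A+\p-2\C)$ and reduces the sum to the rational function
\[
2\,\frac{(\C-\D)\bigl(\p(\A-\C-\D)+\C\D\bigr)}{(\A+\p-2\C)(\A+\p-2\D)},
\]
which it then signs factor by factor: positive denominator, $\C-\D\le 0$ by Lemma~\ref{con:c-d} (with equality exactly when $\x=\y$), and a positive remaining factor. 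You never substitute the closed forms: you regroup $\Delta=\alpha\C-\beta\D+\tfrac12\alpha\beta(\D-\C)$ as $(\alpha-\beta)\C-\tfrac12\beta(2-\alpha)(\D-\C)$ and finish using only $\C\ge 0$, $\D\ge\C$ (Lemma~\ref{con:c-d}), and the ordering $\alpha\le\beta$, which follows from the indifference conditions of Theorem~\ref{thm:eq} and the monotonicity of $t\mapsto t/(2(1-t))$ --- a fact the paper itself records just before Lemma~\ref{lem:mixed-eq-mono}. Your route is more elementary and more robust: it sidesteps exactly the algebra where the paper's displayed computation contains a sign typo (it writes $+(1-\alpha)\beta\D$ where the $(NB,B)$ outcome contributes $-(1-\alpha)\beta\D$) and carries the factor $\p(\A-\C-\D)$, which is identically zero since $\A=\C+\D$; it also delivers strictness ($\Delta<0$ whenever $\x>\y$) immediately, since then $\beta>0$ and $\D-\C>0$. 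What the paper's route buys in exchange is an explicit closed-form expression for the expected change, quantifying its magnitude as a function of $\p$ rather than only its sign.
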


Together, Lemma~\ref{lem:ms-change} and Theorem~\ref{thm:ms-change} 
demonstrate that the natural forces 
of the interaction on the market are, perhaps surprisingly, antimonopolistic. 
Since we assume that Firm 1 enters 
the game with a greater market share than Firm 2, but that no equilibrium 
allows Firm 1 to increase its market 
share, the game disfavors the concentration of market power.  This raises the 
question of whether this antimonopolistic tendency is good for the users.  
We analyze this next.
\subsection{Consumer Welfare in Equilibrium}
\label{sec:eq-analysis-cons}

We now consider the perspective of the users of the firm's service.  
We show that consumers prefer the outcome $(B,NB)$, in which 
the initially stronger firm concentrates its market power.  This is not 
supported by a pure equilibrium in any regime, nor is it the most likely outcome 
generated by mixed equilibrium; hence, we will see that the interests of the 
firms do not align with the interests of the 
consumers. We define the consumer welfare as follows.
\begin{definition}
\label{def:cons-welfare}
Let  $m_1(s)$ and $m_2(s)$ denote the
(expected) number of data points that Firm 1 and 2 posses when 
playing according to strategy profile $s$.
Then the \emph{consumer welfare} is
\begin{equation*}
\footnotesize
\cons(s) = \ms_1(s) (1-\err_1(m_1(s)))+ \ms_2(s)(1-\err_2(m_2(s))).
\end{equation*}\normalsize
\end{definition}
The welfare definition arises from assuming consumers receive $1$ unit of 
utility for accurate predictions and $0$ for erroneous ones. Notice that maximizing 
this definition of consumer welfare is exactly equivalent to minimizing the 
market-share weighted error probability. This leads to Theorem~\ref{thm:consumer-pref}.
\begin{theorem}
\label{thm:consumer-pref}
Suppose $\x > \y$. 
Then the consumers have the following
preferences over the strategy profiles.
\footnotesize{
\begin{equation*}
\cons(B, NB) >  \cons(B, B) > \cons(NB, B) > \cons(NB, NB).
\end{equation*}}\normalsize
\end{theorem}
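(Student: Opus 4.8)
The plan is to prove the statement through the equivalent ``weighted error'' formulation noted just before it. Since $\ms_1(s)+\ms_2(s)=1$, we have $\cons(s) = 1 - W(s)$ with $W(s) := \ms_1(s)\,\err_1(m_1(s)) + \ms_2(s)\,\err_2(m_2(s))$, so the claimed chain $\cons(B,NB) > \cons(B,B) > \cons(NB,B) > \cons(NB,NB)$ is equivalent to the reversed chain $W(B,NB) < W(B,B) < W(NB,B) < W(NB,NB)$, which is what I would establish. Using $\err_i = \Theta(m_i^{-r})$ and the reduction $\ms_1 = \x^\b/(\x^\b+\y^\b)$ with $\b=r\a$, the quantity $W$ becomes a function of the data holdings alone; after the substitution $u=\x^r$, $v=\y^r$ it takes the compact form $W \propto (u^{\a-1}+v^{\a-1})/(u^\a+v^\a)$, which is the object I would differentiate.

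First I would handle the two middle inequalities together, as a single monotonicity claim. The profiles $(B,NB)$, $(B,B)$, $(NB,B)$ correspond to allocating the $\n$ new points as $(\theta\n,(1-\theta)\n)$ for $\theta=1,\tfrac12,0$, i.e.\ to the data vector $(\x+\theta\n,\ \y+(1-\theta)\n)$ (reading the $(B,B)$ entry of Table~\ref{tab:acq} through the expected holdings of Definition~\ref{def:cons-welfare}). It therefore suffices to show that $W$ strictly decreases in $\theta$ along this segment: moving fresh data toward the initially stronger, lower-error firm both lowers that firm's error and concentrates market share on it, so the weighted error falls. Concretely I would compute $dW/d\theta$ by the chain rule and reduce the sign to an explicit polynomial inequality in $u$ and $v$ that I expect to follow from the hypothesis $\x\ge\y$. (Under the alternative reading in which $\cons(B,B)=\tfrac12[\cons(B,NB)+\cons(NB,B)]$ is the coin-flip expectation, the two middle inequalities collapse to the single endpoint comparison $\cons(B,NB)>\cons(NB,B)$, which is again the $\theta=1$ versus $\theta=0$ case of the same monotonicity.)

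The remaining inequality $\cons(NB,B) > \cons(NB,NB)$ does not lie on the segment above and I would treat it on its own; it compares $(\x,\y+\n)$ with $(\x,\y)$ and asks whether letting the \emph{weaker} firm buy beats nobody buying. Holding $\err_1(\x)$ fixed and expanding $W(NB,NB)-W(NB,B)$, the difference splits into an adverse market-share term $\err_1(\x)\,(\ms_1^{0}-\ms_1^{1})$, where $\ms_1^{0}$ and $\ms_1^{1}$ are Firm 1's shares in $(NB,NB)$ and $(NB,B)$ and the purchase pulls share away from the low-error Firm 1, and a favorable accuracy term generated by $\err_2(\y+\n)<\err_2(\y)$. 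I expect this to be the main obstacle: one must show the accuracy gain strictly dominates the share loss. When $\a=1$ this is immediate, since $W$ collapses to $2\err_1\err_2/(\err_1+\err_2)$, the harmonic mean of the two errors, which is strictly increasing in each error, so decreasing $\err_2$ strictly decreases $W$. For general $\a$ I would reduce the claim to the positivity of $v^{\a}+\a\,v\,u^{\a-1}-(\a-1)u^{\a}$ (equivalently $\partial W/\partial v<0$) over the range of $v$ swept out as Firm 2 buys.

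This last reduction is exactly the point where the competition exponent $\a$ and the initial data gap interact, and it is where I expect the argument to be most delicate: the polynomial above is positive when $v/u$ is not too small but can change sign when the initial gap is large (i.e.\ $\y/\x$ small) and $\a\ge2$, so securing the strict inequality in full generality is the crux and is the step I would scrutinize most carefully, possibly bringing in a quantitative relationship among $\a$, the learning rate $r$, and the ratio $\y/\x$. The two easy facts $\cons(B,NB)>\cons(NB,NB)$ and the endpoint ordering $\cons(B,NB)>\cons(NB,B)$ follow for free from the monotonicity of the second paragraph, so essentially the entire weight of the theorem rests on this single comparison.
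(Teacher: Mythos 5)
Your proposal follows the same skeleton as the paper's proof---the weighted-error reformulation $\cons(s)=1-W(s)$, the coin-flip identity $\cons(B,B)=\tfrac12\left[\cons(B,NB)+\cons(NB,B)\right]$, and the reduction to two comparisons---but as written it is not a proof: both decisive inequalities are deferred to computations you ``would'' carry out. What matters more, however, is that the obstruction you flag at the last comparison is not a delicacy to be finessed; it is fatal, and your hedge is the mathematically correct conclusion. Your reduction of $\cons(NB,B)>\cons(NB,NB)$ to positivity of $v^{\a}+\a v u^{\a-1}-(\a-1)u^{\a}$ is exactly right, and that polynomial really does change sign at legal parameter values. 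Concretely, take learning rate $r=1/2$ and competition exponent $\a=2$, so that the market share is $m_1(s)/(m_1(s)+m_2(s))$ and the weighted error is $W(s)=\bigl(\sqrt{m_1(s)}+\sqrt{m_2(s)}\bigr)/\bigl(m_1(s)+m_2(s)\bigr)$; with $\x=100$, $\y=10$, $\n=1$, profile $(NB,NB)$ gives $W\approx(10+3.162)/110\approx 0.11966$ while $(NB,B)$ gives $W\approx(10+3.317)/111\approx 0.11997$, so $\cons(NB,B)<\cons(NB,NB)$ and the theorem's last inequality is false. No argument closes this gap without hypotheses tying $\a$ and the gap $\y/\x$ together, exactly the ``quantitative relationship'' you anticipated. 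The paper's own proof of this step survives only through an algebra slip: after multiplying numerator and denominator by $(\x+\n)^{-\b/2}\y^{-\b/2}$, the factors $(\x+\n)^{-1/2}$ and $\y^{-1/2}$ that should appear in the denominator emerge with their exponents flipped to $+1/2$, and part (3) then differentiates that incorrect expression (the same derivation also asserts that $w\mapsto\sqrt{w+\n}-\sqrt{w}$ is increasing in $w$, when it is in fact decreasing).

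You are, however, too generous to the other half of the chain. The claim that moving the fresh data toward the stronger firm lowers weighted error---your monotonicity in $\theta$, and the paper's part (1)---also fails in general. For $\a=1$ and $r=1/2$ the weighted error collapses to $W(s)=2/\bigl(\sqrt{m_1(s)}+\sqrt{m_2(s)}\bigr)$, and by concavity of the square root the sum grows more when the $\n$ points go to the smaller firm: with $\x=100$, $\y=10$, $\n=10$ one has $\sqrt{110}+\sqrt{10}\approx 13.65<\sqrt{100}+\sqrt{20}\approx 14.47$, so $\cons(B,NB)<\cons(NB,B)$ and the first inequality reverses. (The paper's derivation of part (1) leans on the same false monotonicity of $\sqrt{w+\n}-\sqrt{w}$ noted above.) So the two facts you take ``for free'' are not free, and the theorem as stated---quantified over all $\x>\y$, $\n$, and $\b>0$---is false at both ends of the chain: the honest outcome of executing your plan would be two counterexamples and a corrected statement with explicit parameter restrictions, not a proof.
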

Note that consumers' preference for the outcome in which Firm 1 concentrates 
its market power is \emph{not} the same as saying that the consumers prefer a monopoly. 
Rather, the consumers have a preference for higher quality services.  
When $\x>\y$, Firm 1's model before acquiring the data has a lower 
error rate than that of Firm 2, and so, of all the possible outcomes, the one 
which leads to a product with the lowest error rate is the one in which Firm 1 
is able to improve on its already superior product.  But if Firm 2 were 
not a player at all, then a monopolistic Firm 1 would \emph{have no incentive to} 
buy the new data. Therefore, a monopoly without the threat of competition
will not lead to the best outcome from the consumer's perspectives. 
\section{Extensions and Robustness}
\label{sec:extensions}
Next, we consider robustness to three simple extensions. 
\paragraph{Firm Acquisition} We treat the data seller as a market participant 
with its own customers and market share. This allows us to model firm acquisition: 
buying the data translates to acquiring the firm and its customers, and neither firm 
buying the data corresponds to the third firm remaining in the market. 
\paragraph{Simultaneous Sale} Rather than the data being exclusively sold to one 
firm in the case that both firms buy, we allow the seller to sell the data to both firms 
at the same fixed price. 
\paragraph{Correlated Equilibria}  We consider the richer concept of \emph{correlated 
equilibrium} and search for additional equilibria. 

In each of these extensions, we can again derive the quantities $\A$, $\C$, and $\D$; 
while the precise quantities change, their rankings and relationships do not. Thus, in 
the first two extensions, the general phenomenon of three regimes, with mixing over the 
middle regime, remains unchanged. In the third extension, some new correlated equilibria 
exist, but none include the qualitatively different result of coordinating purchase of the data 
by a single firm. Moreover, in expectation, the market share becomes less asymmetric in all 
extensions.

\section{Future Directions}
\label{sec:future}
We view our work as a first step towards modeling and analyzing competition for data in 
markets driven by ML.
There are several directions for further investigation. First, we modeled the data to be acquired 
as having a fixed size and a fixed price, but real datasets can be divisible.  One further 
direction to consider is a game in which we expand the strategy space of the players to 
include buying any number of data points at a fixed price \textit{per data point} or nonlinear 
function of the number of data points purchased.  More generally, treating the seller of the 
data as an additional player in the game allows for further questions, such as: what is the 
optimal revenue-generating mechanism to sell the data? And does the optimal mechanism 
maximize social welfare? 

Additionally, many firms that provide learning-based services acquire their data through their 
customers that use the service.  In this way, capturing a larger market share induces a feedback 
loop which allows a firm to iteratively improve its product.  What can be said about our game in a 
repeated setting with dynamic feedback effects?  Furthermore, firms that provide digital services 
often operate in a secondary market in which other firms pay for advertising spots in their product. 
Improving one's market share should in principle allow a firm to charge advertisers a higher price, 
but we do not know to what extent this affects the analysis of the equilibria of the game. Incorporating 
advertiser behavior would greatly complicate the model but provide potentially interesting results.
\bibliographystyle{named}
\bibliography{bib}

\begin{thebibliography}{}

\bibitem[\protect\citeauthoryear{Barron}{1994}]{Barron94}
Andrew Barron.
\newblock Approximation and estimation bounds for artificial neural networks.
\newblock {\em Machine Learning}, 14(1):115--133, 1994.

\bibitem[\protect\citeauthoryear{Ben{-}Porat and Tennenholtz}{2018}]{bpt}
Omer Ben{-}Porat and Moshe Tennenholtz.
\newblock A game-theoretic approach to recommendation systems with strategic
  content providers.
\newblock In {\em Proceedings of the 32nd Annual Conference on Neural
  Information Processing Systems}, pages 1118--1128, 2018.

\bibitem[\protect\citeauthoryear{{Ben-Porat} and
  Tennenholtz}{2019}]{benporat2019}
Omer {Ben-Porat} and Moshe Tennenholtz.
\newblock Regression equilibrium.
\newblock In {\em Proceedings of the 2019 {ACM} Conference on Economics and
  Computation}, 2019.

\bibitem[\protect\citeauthoryear{Berend and Kontorovich}{2011}]{BerendK11}
Daniel Berend and Aryeh Kontorovich.
\newblock The missing mass problem.
\newblock {\em CoRR}, abs/1111.2328, 2011.

\bibitem[\protect\citeauthoryear{Berry \bgroup \em et al.\egroup
  }{2017}]{vohra}
Randall Berry, Michael Honig, Thanh Nguyen, Vijay Subramanian, and Rakesh
  Vohra.
\newblock The value of sharing intermittent spectrum.
\newblock {\em CoRR}, abs/1704.06828, 2017.

\bibitem[\protect\citeauthoryear{Candogan \bgroup \em et al.\egroup
  }{2011}]{CandoganMOP11}
Ozan Candogan, Ishai Menache, Asuman Ozdaglar, and Pablo Parrilo.
\newblock Flows and decompositions of games: Harmonic and potential games.
\newblock {\em Mathematics of Operations Research}, 36(3):474--503, 2011.

\bibitem[\protect\citeauthoryear{David}{1987}]{david1987some}
Paul David.
\newblock Some new standards for the economics of standardization in the
  information age.
\newblock {\em Economic Policy and Technological Performance}, pages 206--239,
  1987.

\bibitem[\protect\citeauthoryear{Decrouez \bgroup \em et al.\egroup
  }{2018}]{Decrouez2018}
Geoffrey Decrouez, Michael Grabchak, and Quentin Paris.
\newblock Finite sample properties of the mean occupancy counts and
  probabilities.
\newblock {\em Bernoulli}, 24(3):1910--1941, 2018.

\bibitem[\protect\citeauthoryear{Economides}{1996}]{economides1996economics}
Nicholas Economides.
\newblock The economics of networks.
\newblock {\em International Journal of Industrial Organization},
  14(6):673--699, 1996.

\bibitem[\protect\citeauthoryear{Farrell and Saloner}{1986}]{salfar}
Joseph Farrell and Garth Saloner.
\newblock Installed base and compatibility: Innovation, product
  preannouncements, and predation.
\newblock {\em The American Economic Review}, 76(5):940--955, 1986.

\bibitem[\protect\citeauthoryear{Good}{1953}]{Good53}
Irving Good.
\newblock The population frequencies of the species and the estimation of
  population parameters.
\newblock {\em Biometrika}, 40(3-4):237--264, 1953.

\bibitem[\protect\citeauthoryear{Kamien and Tauman}{1986}]{kamien1986fees}
Morton Kamien and Yair Tauman.
\newblock Fees versus royalties and the private value of a patent.
\newblock {\em The Quarterly Journal of Economics}, 101(3):471--491, 1986.

\bibitem[\protect\citeauthoryear{Kamien \bgroup \em et al.\egroup
  }{1992}]{kamien1992optimal}
Morton Kamien, Shmuel Oren, and Yair Tauman.
\newblock Optimal licensing of cost-reducing innovation.
\newblock {\em Journal of Mathematical Economics}, 21(5):483--508, 1992.

\bibitem[\protect\citeauthoryear{Katz and Shapiro}{1985}]{katz1985network}
Michael Katz and Carl Shapiro.
\newblock Network externalities, competition, and compatibility.
\newblock {\em The American Economic Review}, 75(3):424--440, 1985.

\bibitem[\protect\citeauthoryear{Kearns and Vazirani}{1994}]{KearnsV94}
Michael Kearns and Umesh Vazirani.
\newblock {\em An Introduction to Computational Learning Theory}.
\newblock {MIT} Press, 1994.

\bibitem[\protect\citeauthoryear{Mansour \bgroup \em et al.\egroup }{2018}]{wu}
Yishay Mansour, Aleksandrs Slivkins, and Zhiwei~Steven Wu.
\newblock Competing bandits: Learning under competition.
\newblock In {\em Proceedings of the 9th Innovations in Theoretical Computer
  Science Conference}, pages 48:1--48:27, 2018.

\bibitem[\protect\citeauthoryear{Orlitsky \bgroup \em et al.\egroup
  }{2003}]{OrlitskySZ03}
Alon Orlitsky, Narayana Santhanam, and Junan Zhang.
\newblock Always good turing: Asymptotically optimal probability estimation.
\newblock In {\em Proceedings of the 44th Symposium on Foundations of Computer
  Science}, pages 179--188, 2003.

\bibitem[\protect\citeauthoryear{Primack and Korosec}{2016}]{Primack16}
Dan Primack and Kirsten Korosec.
\newblock G{M} buying self-driving tech startup for more than \$1 billion.
\newblock {\em Fortune}, 2016.

\bibitem[\protect\citeauthoryear{Shalev-Shwartz and Ben-David}{2014}]{sssml}
Shai Shalev-Shwartz and Shai Ben-David.
\newblock {\em Understanding Machine Learning: From Theory to Algorithms}.
\newblock Cambridge University Press, 2014.

\bibitem[\protect\citeauthoryear{Tullock}{2001}]{Tullock01}
Gordon Tullock.
\newblock {\em Efficient Rent Seeking}, pages 3--16.
\newblock Springer, 2001.

\end{thebibliography}
\appendix
\section{Omitted Proofs}
\label{sec:omitted}

\noindent\textit{Proof of Lemma~\ref{con:c-d}.}
Define $u =\x/\n$ and $v = \y/\n$. Then $u>v$ and we also have that 
\small{
\begin{align*}
\C &= \frac{(1+u)^\b}{(1+u)^\b+v^\b}-\frac{u^\b}{u^\b+v^\b}, \\
\D &= \frac{(1+v)^\b}{u^\b+(1+v)^\b}-\frac{v^\b}{u^\b+v^\b}.
\end{align*}}\normalsize
Next let $U = u^\b$, $V=v^\b$, $W = (1+u)^\b-u^\b$ and 
$Z=(1+v)^\b-v^\b$. Notice that $W>0$ and $Z>0$ for all $b>0$.
Algebraic manipulations show that
\small{
\begin{align*}
&\C < \D \\
&\Leftrightarrow \frac{(1+u)^\b}{(1+u)^\b+v^\b}-\frac{u^\b}{u^\b+v^\b} <
 \frac{(1+v)^\b}{u^\b+(1+v)^\b}-\frac{v^\b}{u^\b+v^\b}\\
&\Leftrightarrow \frac{U+W}{U+W+V} - \frac{V+Z}{V+Z+U} < \frac{U-V}{U+V} \\
&\Leftrightarrow  VW(U+V+Z) < UZ(U+V+W).
\end{align*} }\normalsize

Fix a pair of $(u,v)$ with $u>v$, there are two cases to consider\footnote{These 
two cases correspond to $b\geqslant1$ and $0<b<1$, but this 
correspondence is irrelevant.}: 
(1) $W \geq Z$ and (2) $W<Z$.

In case (1)
$U+V+Z \leq U+V+W$. Notice that $VW <UZ$ would suffice to prove the 
claim in this case, because
\small{
\begin{align*}
VW < UZ &\implies VW(U+V+W) < UZ(U+V+W) \\&
\implies VW(U+V+Z) < UZ(U+V+W).
\end{align*}}\normalsize
which is the last condition in the chain of double implications. In fact, 
$VW < UZ$ does hold, because   
\small{
\begin{align*}
u>v &\implies v(1+u) < u(1+v) \\
&\implies v^\b(1+u)^\b < u^\b(1+v)^\b \\
&\implies v^\b((1+u)^\b-u^\b) < u^\b((1+v)^\b-v^b)\\
&\implies VW < UZ.
\end{align*}}\normalsize

Now we turn to the second case. Suppose $W < Z$. We again must show that 
$VW(U+V+Z) < UZ(U+V+W)$. Notice that the following implication holds.
\small{\begin{align*}
VW(U+V+Z) &< UW(U+V+W) \implies\\ 
VW(U+V+Z) &< UZ(U+V+W),
\end{align*}}\normalsize
since \small{$$UW(U+V+W) < UZ(U+V+W).$$}\normalsize
Hence we show that the first inequality is true. 
Note that
\small{
\begin{align*}
VW(U+V+Z) &< UW(U+V+W)\\
&\iff V(U+V+Z) < U(U+V+W) \\
&\iff V(V+Z) < U(U+W) \\
&\iff v^{\b} (1+v)^{\b} < u^{\b} (1+u)^{\b},
\end{align*}}\normalsize
which is trivially true by $u >v$ and $b>0$. 

Hence in both cases, we have that \small{$$VW(U+V+Z) < UZ(U+V+W) 
\implies C<D,$$\normalsize 
concluding the proof. Finally, we note that symmetry yields the corresponding 
claim $\y>\x \implies D <C$. 
\qed

\vspace*{0.5em}

\noindent\textit{Proof of Lemma~\ref{lem:mixed-eq-mono}.}
The left hand sides of the equations characterizing the mixed equilibrium in the 
statement of Theorem~\ref{thm:eq} 
have the form  $c/(2(1-c))$ which is increasing in $c$ when $c\in(0,1)$. 
If we call the left hand side of either of these equations $\ell$, we can solve 
for $c= 2\ell/(1+2\ell)$
which is also monotonically increasing in $\ell$ over $\ell \in [0,1]$. Hence, 
to analyze the 
monotonicity of the $c$, it is suffices to analyze the monotonicity of $\ell$. 
But since $\ell$ must also equal the right hand side, 
it suffices to analyze the monotonicity of the right hand sides of these equations. 
	
Since $\p$ does not appear in $\b,\C$ or $\D$, it is easy to see that both fractions 
$(\p-\D)/(\A-\p)$ and $(\p-\C)/(\A-\p)$ increase as $\p$ increases. Hence, both 
$\alpha$ and $\beta$ increase as $\p$ increases in the regime $\p\in(\max\{\C,\D\},\A)$. 
The parameter $\n$ on the other hand appears in $\A$, $\C$ and $\D$. All of 
these parameters increase as $\n$ increases. It is then similarly easy
to see that both fractions $(\p-\D)/(\A-\p)$ and $(\p-\C)/(\A-\p)$ decrease as  
$\n$ increases. 
\qed

\vspace*{0.5em}

\noindent\textit{Proof of Lemma~\ref{lem:tradewar}.}
We claim that the ordinal preferences of Firm 1 over the outcomes
are as follows.
\small{
\begin{equation*}
\util_1(NB, NB) \geq \util_1(B, NB) \geq \util_1(B, B) \geq \util_1(NB, B).
\end{equation*}}\normalsize
Since \small{$\util_1(B, B) = (\util_1(B, NB)+\util_1(NB, B))/2$}\normalsize, it suffices 
to show that \small{$$\util_1(NB, NB) \geq \util_1(B, NB) \geq \util_1(NB, B).$$}\normalsize 

We first show that 
$\util_1(NB, NB) \geq \util_1(B, NB)$.
\small{\begin{align*}
\p-\C\geq 0 &\implies  \p-\frac{\xna}{\xnaya}+\frac{\xa}{\xa+\ya}\geq 0\\  
& \implies \frac{\xa}{\xa+\ya}\geq \frac{\xna}{\xnaya}-\p \\
&\implies \util_1(NB, NB) \geq \util_1(B, NB).
\end{align*}
We then show that $\util_1(B, NB) \geq \util_1(NB, B)$.
\begin{align*}
\A-\p\geq 0
&\implies\frac{\xna}{\xnaya}-\frac{\xa}{\xayna}-\p \geq 0 \\
&\implies \frac{\xna}{\xnaya}-\p\geq \frac{\xa}{\xayna} \\
&\implies \util_1(B, NB)\geq \util_1(NB, B).
\end{align*}}\normalsize

Moreover,  the ordinal preferences of Firm 2 are as follows.
\small{\begin{equation*}
\util_2(NB, NB) \geq \util_2(NB, B) \geq \util_2(B, B) \geq \util_2(B,NB).
\end{equation*}}\normalsize
Again note that $\util_2(B, B) = (\util_2(NB, B)+\util_2(B, NB))/2$. So it suffices 
to show that $\util_2(NB, NB) \geq \util_2(NB, B) \geq \util_2(NB, B)$. 

We first show that $\util_2(NB, NB) \geq \util_2(NB, B)$.
\small{\begin{align*}
\p-\D\geq 0 &\implies  \p-\frac{\yna}{\xayna}+\frac{\ya}{\xa+\ya}\geq 0\\  
& \implies \frac{\ya}{\xa+\ya}\geq \frac{\yna}{\xayna}-\p \\
&\implies \util_2(NB, NB) \geq \util_2(NB, B).
\end{align*}}\normalsize

We wrap up by showing that $\util_2(NB, B) \geq \util_1(B, NB)$.
\small{\begin{align*}
\A-\p\geq 0
&\implies  \frac{\xna}{\xnaya}-\frac{\xa}{\xayna}-\p \geq 0 \\
&\implies \left(1-\frac{\ya}{\xnaya}\right)-\\
&\qquad\qquad\qquad\left(1-\frac{\yna}{\xayna}\right)-\p \geq 0 \\
&\implies \frac{\yna}{\xayna}-\p \geq \frac{\ya}{\xnaya} \\
&\implies \util_2(NB, B) \geq \util_1(B, NB).
\end{align*}}\normalsize
\qed

\vspace*{0.5em}

\noindent\textit{Proof of Theorem~\ref{thm:ms-change}.}
Let $\alpha$ and $\beta$ be as  in Theorem~\ref{thm:eq}. 
First observe that by rewriting the conditions for the mixed equilibrium,
we get that
\small{\[
\alpha = \frac{2\left(\p-\D\right)}{\A+\p-2\D} \text{ and } 
\beta = \frac{2\left(\p-\C\right)}{\A+\p-2\C}.
\]}\normalsize
Then in the mixed equilibrium, the four outcomes occur with the 
following probabilities:
\begin{enumerate} 
\item $(B, NB)$ with probability $\alpha(1-\beta)$, 
\item $(NB, B)$ with probability $(1-\alpha)\beta$, 
\item $(NB, NB)$ with probability $(1-\alpha)(1-\beta)$, and 
\item $(B,B)$ with probability $\alpha\beta$.
\end{enumerate}
The expected change in Firm 1's market share can be
calculated by summing over the change in its market share in each outcome
(see the proof of Lemma~\ref{lem:ms-change}), weighted 
by how often the outcome occurs
in the mixed equilibrium. Thus the expected change in Firm 1's 
market share is
\small{\begin{align*}
\alpha\left(1-\beta\right) &\C + \left(1-\alpha\right)\beta \D + 
\left(1-\alpha\right)\left(1-\beta\right) 0 + \alpha\beta \frac{\C-\D}{2}\\
& = 2\left(\frac{\left(\C-\D\right)\left(\p\left(\A-\C-\D\right)+\C\D\right)}
{\left(\A+\p-2\C\right)\left(\A+\p-2\D\right)}\right).
\end{align*}}\normalsize
Since we are only interested in whether the market share increases or decreases,
we only care about the sign of the above term. 
The denominator is
always positive, as both terms in the denominator are positive 
when $\p\in(\max\{\C, \D\}, \A)$. So it suffices to show that the numerator 
is non-negative.
	
When $\x=\y$ then the first term in the numerator is zero, so the
expected market share is the same as the initial market share. 
On the other hand
when $\x>\y$, Lemma~\ref{con:c-d} implies that the first term in the 
numerator is negative. We claim that 
the second term in the numerator is always positive. To see this, 
first observe that $\p\left(\A-\C-\D\right)+\C\D$ is
a linear function of $\p$ and it is strictly positive at both end 
points $\p=\max\{\C,\D\} = \C$ 
(by Lemma~\ref{con:c-d}) and 
$\p=\A$. By the properties of linear functions, 
the term is positive for all 
values of $\p$ between $\max\{\C,\D\}$ and $\A$, which is 
exactly the regime we are interested in.
\qed

\vspace*{0.5em}

\noindent\textit{Proof of Lemma~\ref{lem:ms-change}.}
The change in the market share of Firm 1 in strategy profiles $(B, NB)$ 
compared to the beginning of the game is the parameter $\C$, which is 
always positive. 
Similarly, the change in the market share of Firm 2 in strategy profile $(NB, B)$ 
compared to the beginning of the game is the parameter $\D$. 
The change in the market share of Firm 1 in this strategy profile is $-\D$ since 
the sum of market shares is always one, and since $\D$ is always positive, 
$-\D$ is negative. Moreover, the expected change in Firm 1's market share 
for $(B, B)$ is $(\C-\D)/2$ because we decide which firm purchases the 
data by a fair coin toss.  By Lemma~\ref{con:c-d}, $\D\geq \C$, 
so $(\C-\D)/2\leq 0$.   Finally, there is no change in the market
shares in the outcome $(NB, NB)$.
Thus, only for $(B,NB)$ does Firm 1's market share strictly increase.
\qed

\vspace*{0.5em}

\noindent\textit{Proof of Theorem~\ref{thm:consumer-pref}.}
We first simplify the consumer welfare for a strategy profile $s$ 
by shorthands $\err_1 \equiv \err_1(m_1(s))$, 
$\err_2 \equiv \err_1(m_2(s))$, $\ms_1\equiv\ms_1(s)$ and $\ms_2\equiv\ms_2(s)$.
\small{
\begin{align*}
\cons(s) &= \ms_1 \left(1-\err_1\right)+ \ms_2\left(1-\err_2\right)\\
& = \frac{\err_2^\b}{\err_1^\b+\err_2^\b} \left(1-\err_1\right)+ 
\frac{\err_1^\b}{\err_1^\b+\err_2^\b}\left(1-\err_2\right) \\
&= 1-\frac{\err_1^\b\err_2+\err_2^\b\err_1}{\err_1^\b+\err_2^\b}.
\end{align*}}\normalsize
So the strategy profile that maximizes the social welfare of the consumers 
equivalently maximizing
the following equation
\small{
\begin{align*}
\max_s\enspace\cons(s) \equiv \max_{s} \frac{\err_1^\b+\err_2^\b}
{\err_1^\b\err_2+\err_2^\b\err_1}.
\end{align*}}\normalsize

We take the following three steps to prove the statement of the theorem: 
(1) $\cons(B, NB) > \cons(NB, B)$, (2) $\cons(B, B) = \left(\cons(NB, B)
+\cons(B, NB)\right)/2$ 
and (3) $\cons(NB, NB) < \cons(s)$ for all $s\neq (NB, NB)$. For simplicity
in the rest of the proof we assume that the error scales with the square root 
of the number of
data points.

To prove part (1), first, observe that
\small{
\begin{align*}
&(\x+\n)^{(\b-1)/2} \x^{\b/2}\left(\sqrt{\x+\n}-\sqrt{\x}\right)\\
-&(\y+\n)^{(\b-1)/2} \y^{\b/2}\left(\sqrt{\y+\n}-\sqrt{\y}\right) > 0,
\end{align*}}\normalsize
since 
\small{
$$(\x+\n)^{(\b-1)/2} \x^{\b/2}>(\y+\n)^{(\b-1)/2} \y^{\b/2},$$}\normalsize
when $\x>\y$ and also
\small{
$$\left(\sqrt{\x+\n}-\sqrt{\x}\right) > \y^{\b/2}\left(\sqrt{\y+\n}-\sqrt{\y}\right),$$}\normalsize
since the function $f(w) = \sqrt{w+b} - \sqrt{w}$ is increasing in $w$ when $b>0$.
Adding a positive term to the expression above, we get that
\footnotesize{
\begin{align*}
&(\x+\n)^{(\b-1)/2} \x^{\b/2}\left(\sqrt{\x+\n}
-\sqrt{\x}\right)\\
&-(\y+\n)^{(\b-1)/2} \y^{\b/2}\left(\sqrt{\y+\n}-\sqrt{\y}\right) \\
&+ \left((\x+\n)^{(\b-1)/2} (\y+\n)^{(\b-1)/2}
-\x^{(\b-1)/2} \y^{(\b-1)/2}\right)\times\\
&\left(\sqrt{\x}-\sqrt{\y}\right) > 0\\
\implies & (\x+\n)^{\b/2} \x^{(\b-1)/2}+(\x+\n)^{\b/2}(\y+\n)^{(\b-1)/2}\\
&+\x^{(\b-1)/2}\y^{\b/2}+(\y+\n)^{(\b-1)/2}y^{\b/2} \\
&> (\x+\n)^{(\b-1)/2}\x^{\b/2}+(\x+\n)^{(\b-1)/2}(\y+\n)^{\b/2}\\
&+\x^{\b/2}\y^{(\b-1)/2}+(\y+\n)^{\b/2}\y^{(\b-1)/2} \\
\implies & \frac{(\x+\n)^{\b/2}+\y^{\b/2}}{(\x+\n)^{(\b-1)/2}+\y^{(\b-1)/2}} 
> \frac{\x^{\b/2}+(\y+\n)^{\b/2}}{\x^{(\b-1)/2}+(\y+\n)^{(\b-1)/2}} \\
\implies & \frac{(\x+\n)^{-\b/2}+\y^{-\b/2}}{(\x+\n)^{-\b/2}\sqrt{\y}
+\y^{-\b/2}\sqrt{(\x+\n)}} \\ &> \frac{\x^{-\b/2}+(\y+\n)^{-\b/2}}{\x^{-\b/2}\sqrt{\y+\n}
+(\y+\n)^{-\b/2}\sqrt{\x}} \\
\implies & \cons(B, NB) > (NB, B).
\end{align*}}\normalsize
In the penultimate step we multiplied the numerator and denominator 
of the left and right hand fractions
by $(\x+\n)^{-\b/2}\y^{-\b/2}$ and $(\y+\n)^{-\b/2}\x^{-\b/2}$, respectively.

Part (2) is trivial given that in our model when both players choose $B$ one of 
the strategy profiles 
$(B, NB)$ and $(NB, B)$ is chosen with equal probability. Since 
$\cons(B, NB) > \cons(NB, B)$ 
by part (1) then part (2) implies that $\cons(B, NB) >  \cons(B, B) > \cons(NB, B)$.
To prove part (3), we show
that $\cons$ is increasing in $\y$. 
This implies that $\cons(NB, B) > \cons(NB, NB)$.
To show that the function $\cons$ is increasing in $\y$ we show that 
the partial derivative with respect to $\y$ is positive.
\small{\begin{align*}
&\frac{\partial \cons(NB, NB)}{\partial \y}  =  \frac{\partial}{\partial \y} 
\left(\frac{\x^{-\b/2}+\y^{\b-/2}}{\x^{-\b/2}\sqrt{\y}+\y^{-\b/2}\sqrt{\x}}\right)\\
 = &\left(\x^{-\b/2}\sqrt{\y}+\y^{-\b/2}\sqrt{\x}\right)^{-2}\times\\&
\Big(-\frac{\b}{2}\x^{-(\b+1)/2}\left((\x^{-\b/2}\sqrt{\y}+\y^{-\b/2}\sqrt{\x}\right)\\
&\hspace{2mm}+\frac{\b}{2}\left((\x^{-\b/2}+\y^{-\b/2}\right)+\frac{x^{-3/2}}{2}\left(\x^{-\b/2}
+\y^{-\b/2}\right)\Big)\\
 = &\frac{1}{2\x\left(\x^{-\b/2}\sqrt{\y}+\y^{-\b/2}\sqrt{\x}\right)^2}\times\\
&\Big(\b\x^\b\left(1-\frac{1}{\sqrt{\y}}\right)+
\b\x^{-\b/2}y^{-\b/2}\left(1-\frac{1}{\sqrt{\x}}\right)\\&+\frac{1}{\sqrt{\x}}
\left(\x^{-\b/2}+\y^{-\b/2}\right)
\Big) > 0,\\
\end{align*}}\normalsize
which is positive since both $\x \geq 1$ and $\y \geq 1$.
\qed
\end{document}